\newtheorem{theorem}{Theorem}
\newtheorem{corollary}[theorem]{Corollary}
\newtheorem*{corollary*}{Corollary}
\newtheorem{lemma}[theorem]{Lemma}
\newtheorem*{proposition*}{Proposition}
\newtheorem{property}[theorem]{Property}
\newtheorem*{property*}{Property}
\newtheorem*{observation*}{Observation}
\theoremstyle{definition}
\newtheorem{definition}[theorem]{Definition}
\newtheorem*{definition*}{Definition}
\theoremstyle{remark}
\newtheorem{remark}[theorem]{Remark}
\newtheorem*{remark*}{Remark}
\newtheorem*{example*}{Example}
\newcommand{\la}{\ensuremath{ \mathcal{L}}}
\newcommand{\fl}[1]{\lfloor #1 \rfloor}
\newcommand{\ce}[1]{\lceil #1 \rceil}
\begin{document}

\title{Labeling Schemes for Bounded Degree Graphs}

%

\author{
	David Adjiashvili \\
	\small Institute for Operations Research, ETH Z\"urich \\
	\small R\"amistrasse 101, 8092 Z\"urich, Switzerland \\
	\small email: addavid@ethz.ch
	\and
	Noy Rotbart \\
	\small Department of Computer Science, University Of Copenhagen\\
	\small Universitetsparken 5, 2200 Copenhagen\\
	\small email: noyro@diku.dk
}

\maketitle

\begin{abstract}
We investigate adjacency labeling schemes for graphs of bounded degree $\Delta = O(1)$. In particular, we present an optimal (up to an additive constant) $\log n + O(1)$ adjacency labeling scheme for bounded degree trees. The latter scheme is derived from a labeling scheme for bounded degree outerplanar graphs. Our results complement a similar bound recently obtained for bounded depth trees [Fraigniaud and Korman, SODA 10’], and may provide new insights for closing the long standing gap for adjacency in trees [Alstrup and Rauhe, FOCS 02’]. We also provide improved labeling schemes for bounded degree planar graphs. Finally, we use combinatorial number systems and present an improved adjacency labeling schemes for graphs of bounded degree $\Delta$ with $(e+1)\sqrt{n} < \Delta \leq n/5$.
\end{abstract}

\section{Introduction}\label{sec:Introduction}
	A labeling scheme is a  method of distributing the information about the
	structure of a graph among its vertices by assigning short \emph{labels}, 
	such that a selected function on pairs of vertices can be computed using only their labels.
	The quality of a labeling scheme is mostly measured by its \emph{size}: that is, the maximum number of bits used in a label.
	Additional important attributes of labeling schemes are the running times of the label generation algorithm (\emph{encoder}), and the 
	decoding algorithm (\emph{decoder}), which replies to a query given a pair of labels.

	Among all labeling schemes, that of \emph{adjacency}
	is perhaps the most fundamental, as it directly comprises an implicit representation of the graph. 
	For a graph $G$ and any two of its vertices $u,v$, the decoder of an adjacency labeling scheme is required
	to deduce whether $u$ and $v$ are adjacent in $G$ directly from their labels. Adjacency queries for bounded degree graphs appear naturally in networks of  small dilation~\cite{BCLR}, peer-to-peer (P2P)~\cite{laoutaris2007bounded}  and wireless ad-hoc networks~\cite{wang2006localized}. 

Our main  contribution is an optimal (up to an additive constant) $\log n + O(1)$ size adjacency labeling scheme 
	for bounded-degree outerplanar graphs. As a special case thereof, we obtain an optimal labeling scheme for
	bounded degree trees. We summarize this result in the following theorem.

	\begin{theorem}\label{thm:main}
	For every fixed $\Delta \geq 1$, the class $\mathcal{O}(n,\Delta)$ of bounded-degree outerplanar graphs admits an
	adjacency labeling scheme of size $\log n + O(1)$, with encoding complexity $O(n \log n)$
	and decoding complexity $O(\log \log n)$.
	\end{theorem}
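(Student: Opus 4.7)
The plan is to prove the outerplanar statement directly and derive the tree case as an immediate corollary, since any tree has a planar embedding with all vertices on a single face. I would first embed $G$ into a maximal outerplanar supergraph $G^+$ whose boundary is a Hamilton cycle and whose inner edges are non-crossing chords; adjacency in $G$ reduces to adjacency in $G^+$ plus an ``edge-present'' bit per incident edge, which costs only $2\Delta=O(1)$ bits per vertex since $\Delta$ is constant. The first $\log n$ bits of the label of $v$ then encode its cyclic position $\sigma(v)\in\{0,\ldots,n-1\}$ on the outer boundary of $G^+$; this already settles boundary adjacency, and the remaining work is to encode the chords using only $O(1)$ extra bits per vertex.

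For the chord encoding I would exploit two structural features of $G^+$: its chords form a laminar family of non-crossing intervals along the boundary (equivalently, the weak dual of $G^+$ is a tree $D$), and bounded degree forces each vertex to be an endpoint of at most $O(1)$ chords. I plan to apply a heavy-path decomposition to $D$ and assign to each vertex a constant-length \emph{type} $\tau(v)$ encoding, for each of its $O(1)$ incident chords, enough information to identify which heavy path of $D$ the chord belongs to and on which side $v$ sits. The decoder, given two labels $(\sigma(u),\tau(u))$ and $(\sigma(v),\tau(v))$, first tests boundary adjacency by comparing positions modulo $n$, and for each potential chord type performs a predecessor search on a sorted list of candidate partner positions maintained implicitly within the type; this van Emde Boas-style search accounts for the $O(\log\log n)$ decoding complexity.

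The main obstacle I foresee is reducing the per-vertex overhead from the generic $O(\log^{*} n)$ one obtains from standard tree-labeling techniques (in the style of Alstrup--Rauhe) down to the claimed $O(1)$. My approach to closing this gap is to exploit $\Delta=O(1)$ directly: because each vertex is incident to only a constant number of chords, the information that needs to be stored at each level of the heavy-path decomposition is also constant, and can be packed into a single $O(1)$-bit slot per vertex rather than distributed over $\Theta(\log^{*} n)$ hierarchy levels. The $O(n\log n)$ encoding time follows from computing the embedding, triangulation, weak dual, and heavy-path decomposition in $O(n)$ time and writing each $O(\log n)$-bit label with standard bit operations. The tree case is then immediate: a tree has no chords, so the scheme collapses to a positional encoding (via an Euler tour) together with $O(1)$ branching bits at each vertex, yielding the optimal $\log n + O(1)$ size claimed by the theorem.
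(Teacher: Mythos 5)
Your proposal takes a genuinely different route from the paper (which embeds $G$ into the Bhatt--Chung--Leighton--Rosenberg edge-universal graph $H_n$ built from a complete binary tree with logarithmic-size clusters), but it has a gap at exactly the point where the whole difficulty of the theorem lives. The problem is your chord encoding. A chord of the maximal outerplanar supergraph $G^+$ joins two boundary positions $\sigma(u),\sigma(v)\in\{0,\dots,n-1\}$ that can be essentially arbitrary relative to one another, so to test adjacency the decoder must recognize, from $\sigma(u)$ plus $O(1)$ extra bits, a set of $O(1)$ specific partner positions --- each a $\log n$-bit quantity with no a priori relation to $\sigma(u)$. Your ``type'' $\tau(v)$ is constant-length, yet you ask it to identify which heavy path of the weak dual $D$ each incident chord belongs to: there are $\Theta(n)$ heavy paths, so naming one already costs $\Omega(\log n)$ bits, and a constant-size field cannot ``maintain implicitly'' a sorted list of candidate partner positions for a predecessor search. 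Storing the partners explicitly costs $\Delta\log n$ bits, which is the trivial bound you are trying to beat; nothing in the sketch explains how the heavy-path structure compresses this to $O(1)$.

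The second weak point is your treatment of the $O(\log^{*}n)$ overhead. In the Alstrup--Rauhe line of work that term does not come from the \emph{amount} of information per decomposition level but from the need to \emph{delimit} variable-length fields inside a single label; bounded degree by itself does not remove that parsing problem, and ``packing into a single $O(1)$-bit slot'' is asserted rather than argued. This is precisely the issue the paper spends most of its effort on: it assigns vertex identifiers in $H_n$ level by level so that a vertex at level $t$ needs only $\alpha_t \approx t + \log(\log N - t)$ bits and each edge identifier only $\beta_t \approx \log(\log N - t)$ bits, proves that $\alpha_t + \Delta\beta_t = \log n + O(1)$ (Lemma~\ref{lem:bound}), and then shows that the length function $f(t)=\alpha_t+\Delta\beta_t$ has at most two pre-images per value (Lemma~\ref{lem:properties}), so a constant-size prefix suffices to disambiguate the field boundaries. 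Your plan contains no analogue of this trade-off between the vertex-identifier length and the edge-identifier length, which is the mechanism that actually makes $\log n + O(1)$ achievable; without it, the natural implementation of your scheme lands at $\log n + O(\log\log n)$ (the paper's warm-up bound) or at $\Delta\log n$, not at $\log n+O(1)$.
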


	Our labeling scheme utilizes a novel technique based on \emph{edge-}universal 
	graphs\footnote{A graph $G$ is \emph{edge-universal} for a class $\mathcal{R}$ of graphs, if every
	graph in $\mathcal{R}$ appears as a subgraph in $G$ (not necessarily induced).}
	for bounded degree outerplanar graphs.
	Unlike other results in the field which rely on a tight connection to induced-universal 
	graphs\footnote{A graph $G$ is \emph{induced-universal} for a class $\mathcal{R}$ of graphs, if every
	graph in $\mathcal{R}$ appears as an induced subgraph in $G$.} ~\cite{alstrup2005labeling,alstrup2002small,Gavoille2003115,korman2006constructing}, 
	our technique embeds the input graph into a small edge-universal graph.
	Moreover, to the best of our knowledge, our labeling scheme is the first to use the total label size to separate  the different components of the label.
	In contrast, other labeling schemes, such as~\cite{Thorup01,alstrup02nca}, introduce an extra overhead to support such separation.
		
	Kannan, Naor and Rudich~\cite{kannan1988implicit} showed that if a graph class $\mathcal{R}$ admits
	an adjacency labeling scheme with maximum label length $g(n)$, then there exists
	an induced-universal graph with $2^{g(n)}$ vertices for $\mathcal{R}$, efficiently constructible from the labeling scheme. 
	The opposite relation holds in a weaker sense. The existence of an induced-universal graph with $2^{g(n)}$ 
	vertices for a family $\mathcal{R}$ of graphs implies the existence of labeling scheme with size $g(n)$.
	This transformation is however not efficient, namely the resulting scheme has exponential running time.
	In light of the existing linear size induced-universal graphs for bounded degree trees~\cite{chung1990universal}, our 
	contribution is in devising an \emph{efficient} labeling scheme of optimal size.

	As a corollary of Theorem~\ref{thm:main}, we also obtain an efficient $(\fl{\Delta/2}+1) \log n$ size labeling scheme 
	for graphs with maximum degree $\Delta$.
	For the case of bounded degree planar graphs we construct a $\frac{\ce{\Delta/2} + 1}{2}\log n$ size labeling scheme
	with average label size of $(1 + o(1))\log n + O(\log\log n)$, improving the best known construction for $\Delta \leq 4$.
	Finally, we observe that a simple application of combinatorial number systems~\cite{knuth2011combinatorial}
	gives an adjacency labeling scheme for all graphs with maximum degree $\Delta(n)$,
	improving  the known bounds for $\Delta(n) \in [(e+1)\sqrt n ,n/5]$.

		We summarize all known results for adjacency labeling schemes in Table~\ref{tab:priors}, and our
	contributions in Table~\ref{tab:current}. Our results for outerplanar graphs, planar graphs and
	general graphs are presented in Section~\ref{sec:outerplanar}, Appendix~\ref{apx:planar} and
	Section~\ref{sec:combinadics}, respectively.
	\emph{We defer all proofs to appendix~\ref{apx:proofs}}.

\begin {table}
\caption{Best known adjacency labeling schemes for graphs with at most $n$ vertices.}
\begin{tabular}{ l c l c l c}

\textbf{Family} 			   				 & \textbf{Upper bound}						 & \textbf{Lower bound}	  		& \textbf{Encoding}	  		& \textbf{Decoding}  			& \textbf{Ref.}   \\ \hline
Trees  				 			 &$\log n+O(\log^* n)$ 			 &$\log n+ \Omega(1)$ 		 &$O(n \log^*n)$ 	& $O(\log^* n)$		& \cite{alstrup2002small} \\ 
Binary trees 			 			&$\log n+O(1)$  				&$\log n+ \Omega(1)$		 & $O(n)$	            	&$O(1)$ 			&\cite{bonichon2006short} \\ 
Bd. depth $\delta$	trees 		&$\log n+O(1)$  			&$\log n+ \Omega(1)$		 & $O(n)$	            	&$O(1)$ 			&\cite{fraigniaud2010compact} \\ 
Bd. deg. $\Delta$	trees 		&$\log n+O(\log^* n)$ 			 &$\log n+ \Omega(1)$ 		 &$O(n \log^*n)$ 	& $O(\log^* n)$		& \cite{alstrup2002small} \\ 
Planar graphs 					 	&$2\log n +O(\log \log n)$  		&$ \log n + \Omega(1)$		 & $O(n)$	            	&$O(1)$ 			&\cite{gavoille2007shorter}\\ 
Outerplanar graphs 					 	&$\log n +O(\log \log n)$  		&$ \log n + \Omega(1)$		 & $O(n)$	            	&$O(1)$ 			&\cite{gavoille2007shorter}\\ 
Bd. deg. $\Delta(n)$ graphs	 	&$\lceil \frac{ \Delta(n)}{2} \rceil \log n+O(\log^*n)$  			&$ \frac{ \Delta(n)}{2}  \log n$		 & $O(n)$	            	&$O(\log^*n)$ &\cite{alstrup2002small} \\ 
Graphs  	 						&$\lfloor \frac{1}{2} n \rfloor + \lceil \log n \rceil$  		&$\lfloor \frac{1}{2} n \rfloor -1$ 		 & $O(n)$	         &$O(1)$ 	&\cite{moon1965minimal} \\
\label{tab:priors}
\end{tabular}
\end{table}

\begin {table}
\begin{center}
\caption{Our contribution for families of graphs with bounded degree $\Delta$.}
\begin{tabular}{ l c l c l c}

\textbf{Family} 			   				 & \textbf{Upper bound}						 & \textbf{Tight}	  		& \textbf{Encoding}	  		& \textbf{Decoding}  			& \textbf{Ref.}   \\ \hline
Trees 		&$\log n+ O(1)$  			&yes		 & $O(n \log n)$	  &$O(\log \log n)$ 			&Cor.~\ref{cor:trees}  \\ 
Outerplanar    &$\log n+ O(1) $				&yes		 & $O(n \log n)$	  &$O(\log \log n)$ 			&Thm.~\ref{thm:main}  \\ 
Planar ($\Delta \leq 4$) &$\frac{3}{2}\log n + O(\log \log n)$  &no  & $O(n \log n)$ & $O(\log \log n)$  & Thm.~\ref{thm:zevel} \\ 
Graphs, $\Delta(n)$   & $\log \binom{n}{\lceil \Delta(n) / 2 \rceil}+ 2\log n$  &no&$O(n)$ &$O(\Delta(n) \log n)$ & Thm.~\ref{thm:combina} \\	
Graphs (unbounded)	 	&$\lceil \frac{ \Delta}{2} \rceil \log n+O(1)$  &no		 & $O(n \log n )$	            	&$O(\log \log n)$ 			&Cor.~\ref{cor:graphs} \\ 
\label{tab:current}
\end{tabular}
\end{center}
\end{table}

\subsection{Previous Work}\label{previous_work}
Alstrup and Rauhe~\cite{alstrup2002small} proved that  forests (and trees) in  $\mathcal{G}(n)$ have an adjacency  labeling scheme of $\log n +O(\log^*n)$.\footnote{ $\log^*n$  denotes the iterated logarithm of $n$.} Their technique uses a recursive decomposition of the tree which yields the same $\log n +O(\log^*n)$ label size for bounded degree trees.
	Fraigniaud and Korman~\cite{fraigniaud2010compact} showed  that bounded depth trees have a labeling scheme of size $\log n +O(1)$.
	Bonichon et~al.~\cite{bonichon2006short}  proved that caterpillars and binary trees enjoy a labeling scheme of size $\log(n)+O(1)$ using a method called ``Traversal and Jumping''. 
    In a follow up paper, Bonichon et~al.~\cite{Cyrill07} claimed without proof that the 
    aforementioned methods can be used to achieve the same bound for bounded degree trees. 	
    Chung~\cite{chung1990universal} showed the existence of an induced-universal  graph with $O(n)$ vertices for  bounded degree trees.
    
	Graphs with maximum degree $\Delta$ have \emph{arboricity}\footnote{The \emph{arboricity} of a graph $G$  is the minimum number of edge-disjoint acyclic subgraphs whose union is $G$.} $k(\Delta) = \lceil \Delta  / 2 \rceil $~\cite{kannan1988implicit} thus, by the theorem of
    Nash-Williams~\cite{nash1961edge}, they can be decomposed into $\lceil \Delta  / 2 \rceil $ forests.
    Alstrup and Rauhe~\cite{alstrup2002small} combined this result with their labeling scheme for forests to obtain
    a labeling scheme of size $(\lceil \Delta  / 2 \rceil) \log n  +O(log^* n)$ for bounded degree graphs. 
    They also proved a matching lower bound of $\Omega(k(\Delta) \log n)$.

    Butler  \cite{butler2009induced} constructed an induced-universal graph for  graphs with maximum degree $\Delta$ with $O(n^{ \lceil \frac{\Delta+1}{2}  \rceil} )$ vertices.
	The author notes that any  induced-universal graph must have at least $cn^{\lfloor \Delta/2 \rfloor  +1 }$ vertices for some $c$ depending only on $\Delta$, which implies that the bounds are optimal when $\Delta$ is even.
	For odd $\Delta$, Esperet et~al.~\cite{esperet2008induced} showed  a smaller induced-universal graph with   $O(n^{\lceil \Delta / 2 \rceil -1 / \Delta} \log^{2+2/\Delta}n)$ vertices. 
  It follows that there exists a labeling scheme for $\mathcal{G}(n,\Delta)$ of size 
    ${\frac{\Delta}{2}}\log n$  bits for even $\Delta$, and 
    $ ({\lceil \frac{\Delta}{2} \rceil -1 / \Delta}) \log n  + \log( \log^{2+2/\Delta}n)$ 
    bits for an odd $\Delta$ (but that is not necessary efficient). We summarize the best known bounds
    for adjacency labeling schemes in Table~\ref{tab:priors}.

	\subsection{Preliminaries}
				For two integers $0 \leq k_1 \leq k_2$ we denote $[k_1] = \{1,\cdots, k_1\}$ and $[k_1,k_2] = \{k_1, \cdots, k_2\}$.
				A binary string $x$ is a member of the set $\{ 0,1 \}^*$, and we denote its length by $\vert x \vert$.
				We  denote the concatenation of two binary strings $x,y$  by $x \circ y$.
				
				For a graph $G$ we denote its set of vertices and edges by $V(G)$ and $E(G)$, respectively.
				The family of all graphs is denoted $\mathcal{G}$.
				For any graph family $\mathcal{R}$, let $\mathcal{R}(n) \subseteq \mathcal{R}$ 
				denote the subfamily containing the graphs of at most $n$ vertices.
				The collection of graphs with bounded degree $\Delta$ in
				$\mathcal{R}(n)$ is denoted $\mathcal{R}(n,\Delta)$. 
				The collection of  planar graphs, outerplanar graphs, and  trees,  in $\mathcal{G}(n)$ 
				is denoted $\mathcal{P}(n), \mathcal{O}(n)$ and $\mathcal{T}(n)$, resp. 
				Unless otherwise stated, we assume hereafter $\Delta$ to be constant. 
				To simplify the presentation we suppress all dependencies on $\Delta$ in all our bounds and running time estimations.
				 All these dependencies can be computed and shown to be at most a multiplicative factor of $O(\Delta\log \Delta)$ times the claimed bounds.
				  We defer the exact details to the journal version of the paper.
				Non constant bounds on the degree are denoted by $\Delta(n)$. We note that all 
				results work  for disconnected graphs.  We assume  trees to be rooted, and 
				denote $ \log_2{n} $ as $\log{n}$.
				For a set of vertices $ S \subset V(G)$ we define $G-S$ to be 
				the graph obtained from $G$ by removing the vertices in $S$ and all incident edges.
				The set of edges in $G=(V,E)$ incident to a vertex $v\in V$ is denoted $E_v$.
							    
		Let   $G=(V,E)  \in \mathcal{G}(n)$, and let $u,v \in V$.
	$adjacency(v,u)$ is the boolean function  over vertices in  $G \in \mathcal{G}$ 
	that returns \textbf{true} if and only if $u$ and $v$ are adjacent in $G$.
	A \emph{label assignment}  for $G \in \mathcal{G}$ is a mapping of each $v \in V(G)$ 
	to a bit  string $\la(v)$, called  the \emph{label} of $v$. An  \emph{adjacency labeling scheme} for  $\mathcal{G}$ consists   of  an encoder  and decoder.
	The  \emph{encoder}  is an algorithm that receives $G \in \mathcal{G}$ as input and  computes the label assignment $e_G$.
	The \emph{decoder}  is an algorithm  that receives any two labels $\la(v),\la(u)$  and  computes the query $d(\la(v),\la(u))$, such that  $d(\la(v),\la(u))=adjacency(v,u)$. The \emph{size} of the labeling scheme
 is the maximum label length.
	Hereafter, we refer to adjacency labeling schemes simply as labeling schemes. For the encoding and decoding 
	algorithms, we assume a $\Omega(\log n)$ word size RAM model.

\section{$\log n + O(1)$ Labeling Scheme for Bounded-Degree Outerplanar Graphs}\label{sec:outerplanar}
In this section we describe a labeling scheme for outerplanar graphs with bounded degree $\Delta$.
Our method relies on an embedding technique of Bhatt, Chung, Leighton and Rosenberg~\cite{BCLR}
for bounded degree outerplanar graphs.
 In their paper, the authors were concerned with \emph{edge-universal graphs} for various families of bounded degree graphs. 
 In particular, they show that for every $n \in \mathbb{N}$ there exists a graph $H_n$ with $O(n)$ vertices and $O(n)$ edges that contains every bounded
degree outerplanar graph $G \in \mathcal{O}(n,\Delta)$ as a subgraph (not necessarily induced).
\subsection{Our Methods}
 Our main tool is an embedding technique due to Bhatt et~al.~\cite{BCLR} of outerplanar graphs 
into $H_n$. On the one hand, the embedding is simple to compute. 
This fact will lead to an efficient $O(n \log n)$ time encoder.
 On the other hand, the embedding satisfies a useful locality property.
 This property  allows our labels to contain both unique vertex identifiers of the graph $H_n$ and edge identifiers, without 
exceeding the desired label size $\log n + O(1)$.

To obtain the latter label size via an embedding into $H_n$ we need to overcome several
difficulties. Although $H_n$ has a linear number of edges, its maximum degree is $\Omega(\log n)$,
thus, unique edge identifiers require $\Omega(\log \log n)$ bits, in general.
 Since also $|V(H_n)| = \Omega(n)$, it follows that a label cannot contain an arbitrary combination of vertex identifiers in $V_n$
and edge identifiers at the same time, as it would lead to labels with size $\log n + \Omega(\log \log n)$.
This difficulty is overcome by exploiting the structure of $H_n$ further and constructing unique
vertex identifiers in a particular way that allows reducing the encoding length. This solution
creates an additional difficulty of separating the different parts of the label in the decoding phase. 
This difficulty is overcome by designing careful encoding lengths that minimize the ambiguity, 
and storing an additional constant amount of information to eliminate it altogether.

\subsection{A Compact Edge-Universal Graph for Bounded-Degree Outerplanar Graphs}

	We describe next the edge-universal graph $H_n = (V_n, E_n)$ constructed by Bhatt et~al.~\cite{BCLR}
	for $\mathcal{O}(n,\Delta)$. We let $k = \min\{s \in \mathbb{N} : 2^s-1\geq n\}$ and set $N = 2^k - 1$.
	The construction uses two constants $c = c(\Delta), g = g(\Delta)$, that depend only on $\Delta$.

	The graph $H_n$ is constructed from the complete binary tree $\mathrm{T}$ on $N$ 
	vertices as follows. To obtain the vertex set $V_n$, split every vertex 
	$v\in V(\mathrm{T})$ at level $t$ in $\mathrm{T}$ into $\gamma_t = c \log (N/2^t)$ vertices $w_1(v), \cdots, w_{\gamma_t}(v)$. 
	The latter set of vertices is called the \emph{cluster} of $v$. 
	For $w\in V_n$ we denote by $t(w)$ the \emph{level} of $w$, that is the level in the  binary tree $T$ of the
	cluster to which $w$ belongs.
	
	The edge set $E_n$ is defined as follows. 
	Two vertices $w_i(v), w_j(u)\in V_n$ are adjacent if and only if the clusters they belong to in $\mathrm{T}$ are
	at distance at most $g$ in $\mathrm{T}$.
	Note that $w_1w_2 \in E_n$ implies $|t(w_1) - t(w_2)| \leq g$.
	This completes the construction of the graph. One can easily check that $|V_n| = O(n)$. $|E_n| = O(n)$ also
	holds, but we do not this fact directly.
	The graph $H_n$ is 
	illustrated in Figure~\ref{fig:construction}. Our labeling scheme relies on the following result of Bhatt et~al.~\cite{BCLR}.
	
	\begin{theorem}[{{Bhatt et~al.~\cite{BCLR}}}]\label{thm:embedding}
		$H_n$ is edge-universal for the class of bounded degree outerplanar graphs $\mathcal{O}(n,\Delta)$. 
	\end{theorem}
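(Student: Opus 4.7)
The plan is to construct, for every outerplanar graph $G$ of maximum degree at most $\Delta$ with $|V(G)| \leq n$, an injective assignment of the vertices of $G$ to vertices of $H_n$ which sends every edge of $G$ to an edge of $H_n$. Since any two vertices of $H_n$ whose clusters are at $T$-distance at most $g$ are adjacent by construction, it suffices to produce a map $\pi \colon V(G) \to V(T)$ satisfying (i) at most $\gamma_t = c \log(N/2^t)$ vertices of $G$ are sent to any single node of $T$ at level $t$, and (ii) for every edge $uv \in E(G)$, $\mathrm{dist}_T(\pi(u),\pi(v)) \leq g$. The actual placement of vertices inside the clusters is then routine.

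The map $\pi$ would be built by a top-down recursive separator decomposition that mirrors the structure of $T$. Outerplanar graphs have treewidth at most $2$, so every outerplanar graph on $m$ vertices admits a vertex set $S$ of constant size whose removal splits it into components each of size at most $2m/3$; bounded degree lets one group these components into two balanced pieces. Starting at the root of $T$ with $G$, I would select such a separator, assign its vertices (plus the auxiliary vertices described next) to the current cluster, and recurse on the two balanced pieces inside the left and right subtrees of $T$. After at most $k = \lceil \log n \rceil$ levels the pieces have constant size and are placed entirely inside a single leaf cluster.

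To enforce locality of edges, I would track at each recursive step a \emph{boundary} set of vertices in the current subproblem that still have $G$-neighbors already placed in some ancestor cluster of $T$, and fold this boundary into the next separator so that every boundary vertex is placed within $g$ levels of its previously assigned neighbor. Because each separator has constant size and each vertex has at most $\Delta$ neighbors, only $O(\Delta)$ new boundary vertices are created per level; together with the exponential decay of subproblem sizes along the recursion, this keeps the total assigned to any level-$t$ node within the budget $\gamma_t$ provided $c = c(\Delta)$ is chosen large enough, while an analogous counting argument bounds the $T$-span of any realized edge by some $g = g(\Delta)$. The main obstacle I expect is precisely this bookkeeping invariant on the boundary: one needs to show that boundary vertices are always absorbed into their correct cluster within $g$ levels of first appearing, which is exactly where the bounded outerplanar separators (giving constant additive growth per level) interact with the bounded degree (bounding the fan-out of each separator) to make both (i) and (ii) close simultaneously.
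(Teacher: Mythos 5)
Your high-level plan (map $V(G)$ into the nodes of $\mathrm{T}$ so that cluster capacities $\gamma_t$ are respected and every edge spans $\mathrm{T}$-distance at most $g$) is the right one, and it is indeed how Bhatt et~al.\ proceed. But the bookkeeping invariant you flag as ``the main obstacle'' is exactly where your argument breaks, and the fix is not a refinement of your scheme but a different mechanism. If you fold the boundary into the \emph{next} separator, then every vertex placed at level $t$ contributes up to $\Delta$ new boundary vertices at level $t+1$, and those absorbed vertices in turn contribute up to $\Delta$ each at level $t+2$. Writing $m_t$ for the number of vertices placed in a level-$t$ cluster, you get $m_{t+1} \geq \Delta\, m_t - O(1)$ in the worst case (think of a bounded-degree tree in which every placed vertex has all its children still unplaced), so $m_t$ grows like $\Delta^t$ and blows past the budget $\gamma_t = c\log(N/2^t)$ after $O(\log\log N)$ levels. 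The claim that ``only $O(\Delta)$ new boundary vertices are created per level'' conflates $\Delta$ per placed vertex with $\Delta$ per level. Note also that the clusters of $H_n$ have size $\Theta(\log(N/2^t))$, not $O(1)$ --- a signal that the intended separators are logarithmic, not the constant-size treewidth-$2$ separators you invoke (which moreover give only a $2/3$-balanced split, whereas aligning the recursion with the complete binary tree $\mathrm{T}$ requires exact bisection).

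The actual construction defeats the factor-$\Delta$ fan-out by \emph{delaying} absorption. Set $k(\Delta) = \log\Delta + 1$. The vertices assigned to a node of $\mathrm{T}$ at level $t$ form a $k(\Delta)$-bisector of size $O(\log(\text{subproblem size}))$: its color classes are the still-unplaced neighbors of the sets stored at the $k(\Delta)-1$ nearest ancestors, plus one class for everything else. A $k$-bisector simultaneously bisects \emph{every} color class, so the pending neighbors of a cluster are halved at each of the next $k(\Delta)$ levels before being absorbed; each descendant at distance $k(\Delta)$ therefore receives only a $2^{-k(\Delta)} = 1/(2\Delta)$ fraction of them, and $\Delta \cdot 2^{-k(\Delta)} = 1/2 < 1$ turns your divergent recursion into a geometric series summing to $O(\log(N/2^t))$ per cluster, matching $\gamma_t$. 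Edges then span at most $g(\Delta) = O(\log\Delta)$ levels of $\mathrm{T}$ --- a constant depending on $\Delta$, but not one you can take to be $1$ or $2$. The role of outerplanarity is precisely that $O(\log n)$-size $k$-bisectors exist for every fixed $k$; bounded treewidth alone gives small separators but not simultaneous bisection of the color classes, which is the ingredient your sketch is missing.
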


\begin{figure}[h!]
  \centering
    \includegraphics[width=0.75\textwidth]{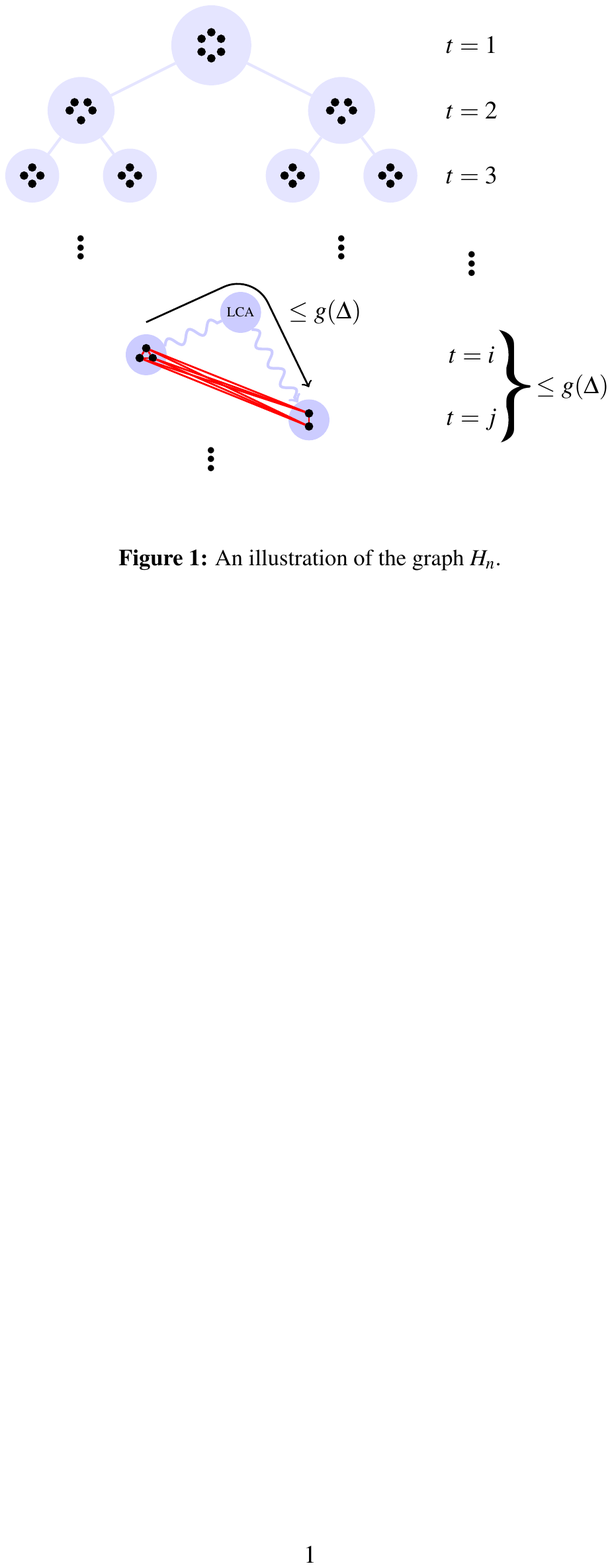}
    \caption{An illustration of the graph $H_n$.}\label{fig:construction}
\end{figure}

	\subsection{Warm-Up: a $\log n +O(\log \log n)$ Labeling Scheme}
	We briefly describe  a simple $\log n + O(\log\log n)$ labeling scheme.
	First, assign unique identifiers $Id$ to the vertices in $H_n$. Since
	$|V_n| = O(n)$ we can assume that $|Id(v)| = \log n + O(1)$ for every $v\in V_n$.
	Next, for every $v\in V_n$ assign unique identifiers to the edges incident to $v$, 
	Since every vertex in $H_n$ has
	$O(\log n)$ neighbours, each edge can be encoded using $\log\log n + O(1)$ bits.
	
	To obtain labels for a given outerplanar graph $G=(V,E)$, compute first an embedding 
	$\phi: V \rightarrow V_n$ of $G$ into $H_n$. Next, define the label of vertex $v\in V$
	to be the concatenation of $Id(\phi(v))$ with the identifiers of all edges
	in $E_n$ leading to images under the embedding $\phi$ of neighbouring vertices, namely all
	$\phi(u)\phi(v)$ such that $uv\in E$. Since the
	maximum degree in $G$ is bounded by the constant $\Delta$, this results in a 
        $\log n +  O(\log\log n)$ label size. It is not difficult to see that encoding and decoding can
	be performed efficiently (we elaborate on it later on).
	
\subsection{The Encoder}
To reduce the size of the labels to $\log n + O(1)$ we need to refine the latter technique significantly.
 As a first step we employ \emph{differential sizing}, a technique first used in the context of labeling 
schemes in~\cite{Thorup01}. In differential sizing some parts of the label do not have a fixed number 
of allocated bits across all labels. Concretely, we use differential sizing for both vertex and edge identifiers.

The resulting labels will have the desired length, but will also contain an undesired ambiguity, 
that will prohibit correct decoding. We will then append a short prefix to the label that will resolve 
this ambiguity.

\subsubsection{Differential Sizing - The Suffix of a Label.}
	 Let us first formally define  our naming scheme for vertex and edge identifiers in $H_n$.

	\begin{definition}\label{def:coherent}
	A \emph{naming} of $H_n$ is an injective function $\mathit{Id}:V_n \rightarrow \mathbb{N}$ and a collection
	of injective functions $\mathit{EId_v}:E_v \rightarrow \mathbb{N}$ for every $v\in V_n$. 
	A naming is \emph{coherent} if for every $v, v_1, v_2 \in V_n$
	\begin{enumerate}
	\item $\mathit{Id}(v_1) > \mathit{Id}(v_2)$ implies that $t(v_1) > t(v_2)$, or $t(v_1) = t(v_2)$ and the cluster
	of $v_2$ appears to the left of the cluster of $v_1$ in $\mathrm{T}$.
	\item $\mathit{EId_v}(vv_1) > \mathit{EId_v}(vv_2)$ implies that $\mathit{Id}(v_1) > \mathit{Id}(v_2)$. 
	\end{enumerate}
	\end{definition}

	We compute a coherent naming by assigning the identifiers $1$ through $|V_n|$ 
	to $V_n$ level by level, traversing the clusters in any single level in 
	$\mathrm{T}$ from left  to right, and then naming the edges incident to 
	$v\in V_n$ from $1$ to $|E_v|$ in a way that is consistent with the vertex naming.

	For $v,v' \in V$ define $\alpha(v) := \ce{\log \mathit{Id(v)}}$ and $\beta(v) := \ce{\log \mathit{EId_v}(vv')}$ and let 
	$$\alpha_t = \max_{v \,:\, t(v) \leq t} \alpha(v) \,\,\,\,\,\,\, \text{and} \,\,\,\,\,\,\,
	\beta_t = \max_{vv' \,:\, t(v) \leq t} \beta(v')$$
	be the maximal number of bits required to encode a vertex name and an edge name for vertices with
	level at most $t$.
	The simple labels described in the beginning of this section store $\log n + O(1)$ and $\log\log n + O(1)$ bits
	for \emph{every} vertex name, and every edge name, respectively. In contrast, our label for a
	vertex in level $t$ stores $\alpha_t$ bits for a vertex name, and $\beta_t$ 
	for edge names. In the following lemma we prove that new labels have the desired size. 

	\begin{lemma}\label{lem:bound}
	For every $t\leq \log N$ it holds that $\alpha_t \leq t + \ce{\log(\log N - t)} + O(1)$,
	$\beta_t \leq \ce{\log(\log N - t)} + O(1)$ and 
	$\alpha_t  + \Delta \beta_t = \log n + O(1)$.
	\end{lemma}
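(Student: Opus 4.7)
The plan is to prove the three inequalities in sequence, using only the construction of $H_n$ and the coherent naming.

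For the bound on $\alpha_t$, I would observe that the coherent naming assigns the identifiers $1,\ldots,|V_n|$ level by level, so for every $v$ with $t(v)\leq t$ the value $\mathit{Id}(v)$ is at most the total number of vertices of $V_n$ at levels $0,\ldots,t$. Since level $s$ of $\mathrm{T}$ contains $2^s$ nodes, each split into $\gamma_s=c(\log N - s)$ copies, this total equals
$$c\sum_{s=0}^{t}2^s(\log N - s).$$
Using the standard identities $\sum_{s=0}^{t}2^s = 2^{t+1}-1$ and $\sum_{s=0}^{t} s\cdot 2^s = (t-1)2^{t+1}+2$, a direct calculation gives the closed form $c\cdot 2^{t+1}(\log N - t + 1) - O(\log N)$, which is $O(2^t(\log N - t))$. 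Taking $\ce{\log(\cdot)}$ yields $\alpha_t \leq t + \ce{\log(\log N - t)} + O(1)$.

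For the bound on $\beta_t$, I would bound the degree in $H_n$ of a vertex $v$ with $t(v)=t$. By construction its neighbours live in clusters of $\mathrm{T}$ at distance at most $g$ from $v$'s cluster; there are $O(1)$ such clusters (with constant depending only on $g$), all at levels in $[t-g, t+g]$, and each has size at most $\gamma_{t-g} = c(\log N - t + g) = O(\log N - t)$. Summing over the constantly many neighbour clusters gives $|E_v| = O(\log N - t)$, and the largest edge identifier at $v$ is therefore $\ce{\log|E_v|} = \ce{\log(\log N - t)} + O(1)$. Taking the maximum that defines $\beta_t$ over the relevant vertices (and absorbing the $g$-dependent constants) gives $\beta_t \leq \ce{\log(\log N - t)} + O(1)$.

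Combining the first two bounds,
$$\alpha_t + \Delta\beta_t \leq t + (1+\Delta)\ce{\log(\log N - t)} + O(1).$$
Setting $y = \log N - t$ the right hand side becomes $\log N + (1+\Delta)\log y - y + O(1)$, so it suffices to show $(1+\Delta)\log y - y = O(1)$ on $y\in[1,\log N]$. A single-variable maximisation shows the extremum is attained at $y^\star = (1+\Delta)/\ln 2$ with maximum depending only on $\Delta$, which is an $O(1)$ quantity since $\Delta$ is assumed constant. Finally, $N = 2^k - 1 < 2n$ gives $\log N = \log n + O(1)$, finishing the third claim. The step I expect to require the most care is the corner case near $t = \log N$, where $\log N - t$ is tiny and the ceilings behave erratically; this is absorbed into the additive constant by noting that the deepest level contributes only $O(1)$ vertices and edge identifiers, so the bounds hold trivially there.
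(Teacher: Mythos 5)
Your proposal is correct and follows essentially the same route as the paper: bound $\alpha_t$ by counting the vertices of $H_n$ up to level $t$, bound $\beta_t$ by the cluster-distance bound on degrees in $H_n$, and then show that $(\Delta+1)\log(\log N - t) - (\log N - t)$ is bounded by a constant depending only on $\Delta$. The only cosmetic difference is that you obtain this last bound by a single-variable maximisation where the paper uses a two-case analysis on $Q = \log N - t$; both are equally valid.
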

	
	We henceforth denote the part of the label containing the vertex name and all its edge identifiers as the \emph{suffix}.
\subsubsection{Resolving Ambiguity.}
	Since the vertex name does not occupy a fixed number of bits across all
	labels, it is a priori unclear which part of the label contains  it. To resolve this
	ambiguity we analyze the following function, which represents the final length of our
	labels for vertices in level $t$ (up to a fixed constant). Let $D = [\ce{\log N}]$ and
	$f: D\rightarrow \mathbb{N}$ be defined as

	$$
	f(t) = \alpha_t + \Delta \beta_t = t + (\Delta+1)\ce{\log(\log N - t))}.
	$$
	The following lemma states that all but a constant number (depending on $\Delta$) of 
	values in $D$ have at most  two pre-images under $f$. This observation is useful, since it implies that the
	knowledge of the level $t(v)$ of the vertex $v$ can resolve all remaining ambiguities in its label, as the 
	vertex name occupies exactly $\alpha_{t(v)}$ bits.

	\begin{lemma}\label{lem:properties}
		Let $r(\Delta) = \ce{8(\Delta+1)\log(\Delta+1)}$. For every $t \in [\ce{\log N} - r(\Delta)]$ the number of integers  $t'\in D\setminus \{t\}$
		that satisfy $f(t) = f(t')$, is at most one.
	\end{lemma}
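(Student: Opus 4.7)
Set $m := \lceil \log N \rceil$. The plan is to decompose $D$ into \emph{plateaus} $I_k := \{t \in D : \lceil \log(m-t)\rceil = k\} = \{m-2^k, \ldots, m - 2^{k-1} - 1\}$ for $k \ge 1$ (with $I_0 = \{m-1\}$). On each $I_k$ the function reduces to the strictly increasing affine map $t \mapsto t + (\Delta+1)k$, whose image is the interval $[a_k, b_k]$ with $a_k := m - 2^k + (\Delta+1)k$ and $b_k := m - 2^{k-1} - 1 + (\Delta+1)k$. Consequently, any twin $t'$ of $t \in I_k$ must lie in a distinct plateau $I_{k'}$ and is uniquely determined by $k'$ via $t' = t + (\Delta+1)(k - k')$. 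The proof reduces to showing that for each $t$ in the hypothesized range, at most one $k' \ne k$ yields a candidate $t'$ that actually sits in $I_{k'}$.

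A short computation gives $b_k - a_{k-1} = \Delta$, so adjacent plateaus $k$ and $k-1$ always share exactly $\Delta+1$ values of $f$; translating back, a twin of $t \in I_k$ in $I_{k-1}$ exists iff $t$ is among the top $\Delta+1$ elements of $I_k$, and symmetrically for $I_{k+1}$ at the bottom. For a ``long-range downward'' twin in plateau $I_{k-j}$ with $j \ge 2$, the non-emptiness condition works out to $h(j) := (\Delta+1)j - 2^{k-1} + 2^{k-j} - 1 \ge 0$; since $h$ is convex in $j$, its maximum over $\{2, \ldots, k\}$ is attained at an endpoint, so no such twin exists as soon as both $h(2) = 2\Delta + 1 - 2^{k-2} < 0$ and $h(k) = (\Delta+1)k - 2^{k-1} < 0$. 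Moreover every long-range downward candidate $t$ lies strictly above the bottom portion of $I_k$, while the analogous long-range upward twins (in $I_{k+j}$, $j \ge 2$) exist only when $k \le \log(2\Delta+1)$ and their candidate $t$'s lie strictly below the top portion of $I_k$.

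Using the hypothesis $t \le \lceil \log N \rceil - r(\Delta)$ yields $k(t) \ge k_0 := \lceil \log r(\Delta)\rceil$. For every $k \ge k_0 + 1$, elementary arithmetic with $r(\Delta) = \lceil 8(\Delta+1)\log(\Delta+1) \rceil$ establishes both $2^{k-2} > 2\Delta+1$ and $2^{k-1} > (\Delta+1)k$, ruling out all long-range twins; meanwhile $|I_k| = 2^{k-1} > 2(\Delta+1)$ makes the top-$(\Delta+1)$ and bottom-$(\Delta+1)$ sub-intervals of $I_k$ disjoint, so at most one short-range twin can apply. On the borderline plateau $k = k_0$, the constraint $t \le m - r(\Delta)$ restricts $t$ to the bottom portion of $I_{k_0}$, excluding every top-located candidate (both $j = 1$ and all long-range downward), while $k_0 > \log(2\Delta+1)$ already excludes long-range upward twins, leaving only the single short-range upward twin. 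The principal technical obstacle is checking that the constant $8$ in $r(\Delta)$ provides just enough slack to force $2^{k-1} > (\Delta+1)k$ at $k = k_0 + 1$; bounding $k_0 + 1 \le 5 + \log(\Delta+1) + \log\log(\Delta+1)$ and $2^{k_0} \ge 8(\Delta+1)\log(\Delta+1)$, this collapses to $7\log(\Delta+1) > 5 + \log\log(\Delta+1)$, verified directly for every $\Delta \ge 1$.
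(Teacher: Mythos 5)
Your plateau decomposition is a genuinely different route from the paper's. The paper never enumerates which plateaus can collide: it normalizes $f$ to $\hat f(t)=t-(\Delta+1)\cdot|\{k: t_k\le t-1\}|$ and shows in a single telescoping inequality that three pre-images $t<t'<t''$ would force $\hat f(t'')-\hat f(t)\ge \frac{K}{2}-(\Delta+1)(\log K+1)>0$ for $K=\lceil\log N\rceil-t\ge r(\Delta)$, using only that consecutive pre-images are separated by jumps and that the first full inter-jump interval after $t$ has length at least $(K-\gamma)/2$. Your interval-overlap computations ($b_k-a_{k-1}=\Delta$, convexity of $h$, the disjointness of the top and bottom windows of $I_k$) are correct and buy strictly more information -- they identify exactly which $t$ collide and with whom -- at the price of a case analysis that must be airtight on the boundary.

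It is not airtight there: the treatment of the borderline plateau $k=k_0$ has a genuine gap. You assert that $t\le m-r(\Delta)$ ``restricts $t$ to the bottom portion of $I_{k_0}$, excluding every top-located candidate (both $j=1$ and all long-range downward).'' That requires $r(\Delta)>2^{k_0-1}+\Delta+1$, which can fail: for $\Delta=9$ one has $r(\Delta)=\lceil 80\log 10\rceil=266$ and $k_0=9$, so $2^{k_0-1}=256$ and $t=m-266$ satisfies the lemma's hypothesis while being precisely the lowest of the top $\Delta+1$ elements of $I_9$; indeed it has a short-range \emph{downward} twin $t'=m-256\in I_8$ (both map to $m-176$). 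So the downward candidates are not excluded by the hypothesis, your final tally (``leaving only the single short-range upward twin'') misidentifies the surviving twin, and -- more importantly -- nothing in the written argument rules out such a $t$ also having a long-range downward twin, which is exactly the configuration that would falsify the lemma. The statement can still be rescued inside your framework: one must show that no $t\le m-r(\Delta)$ in $I_{k_0}$ admits a long-range downward twin, i.e.\ that $2^{k_0-j}+(\Delta+1)j<r(\Delta)$ for all $j\ge 2$, and then invoke disjointness of the top and bottom windows to cap the short-range twins at one. But that estimate does not follow from the crude bounds you use elsewhere (reducing it to $(\Delta+1)j<r(\Delta)/2$ demands $j<4\log(\Delta+1)$, which for $\Delta\in\{1,2\}$ does not cover all $j\le k_0$), so either a sharper inequality or a direct check for small $\Delta$ is needed. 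As written, the $k=k_0$ case is unproved.
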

	\begin{remark}\label{rem:lemma3}
		It is natural to ask if having equal label lengths for vertices in different levels
		can be avoided altogether (thus making Lemma~\ref{lem:properties} unnecessary). This seems
		not to be the case for the following reason. The number of vertices in every
		level is at least $\Omega(\log N)$, thus, with label size $\ell = o(\log \log N)$ one can not
		uniquely represent all vertices in \emph{any} level. Furthermore, the label length is also 
		restricted to $\log n + O(1)$, and the number of levels is $\ce{\log N}$. Thus, a function assigning
		levels to label lengths would need to have domain $[\ce{\log N}]$ and range $[g(N), \ce{\log N}]$ for
		$g(N) = \Omega(\log\log N)$, implying that it cannot be one-to-one.
	\end{remark}

	Recall that the length of the suffix of  vertex $v\in V$ is exactly
	$\alpha_{t(v)} + \Delta \beta_{t(v)}$.
	We next show how the structural property proved here allows to construct a 
	constant size \emph{prefix}, that will eliminate the ambiguity caused by differential sizing.

\subsubsection{Constructing the Prefix.}
 For a formal description of the prefix we need the following definition. 
 We let $r(\Delta) = \ce{8(\Delta+1)\log(\Delta+1)}$, as in Lemma~\ref{lem:properties}.
\begin{definition}\label{def:label_def}
A vertex $v\in V$ is called \emph{shallow} if its level $t(v)$ is at most 
$\ce{\log N} - r(\Delta)$. We call a shallow vertex \emph{early}
if $t(v)$ is the smallest pre-image of $f(t(v))$. A shallow vertex that 
is not early is called \emph{late}.

A vertex $v\in V$ that is not shallow is called \emph{deep}. A deep vertex is  of 
\emph{type} $\tau$, if its level satisfies $t(v) = \ce{\log N} - \tau$. 
\end{definition}

It is easy to verify the following properties.
Lemma~\ref{lem:properties} guarantees that if $v$ is shallow, then
$f(t(v))$ has at most two pre-images under $f$. If $v$ is shallow and
there is only one pre-image for $f(t(v))$, then $v$ is early. 
Finally, observe that the type of deep vertices ranges in the interval $[1, r(\Delta)]$. 

We are now ready to define the prefix of a label $\la(v)$ for a vertex  $v\in V$. 
Every prefix starts with a single bit $D(v)$ that is set to $0$  if $v$ is shallow, and to $1$
if $v$ is deep. The second bit $R(v)$ in every prefix indicates whether a shallow vertex
is early, in which case it is set to $0$, or late, in which case it is set to $1$. The
bit $R(v)$ is always set to $0$ in labels of deep vertices.
The next part $Type(v)$ of the prefix contains $\ce{\log r(\Delta)}$ bits representing the type
of the vertex $v$, in case $v$ is deep. If $v$ is shallow this field is set to zero.
This concludes the definition of the prefix. Observe that the prefix contains $O(\log \Delta) = O(1)$
bits. We stress that length $s_p$ of the prefix is identical 
across all labels.

It is evident that the prefix of a label eliminates any remaining ambiguity. This follows from
the fact that the level $t(v)$ of a vertex $v \in V$ can be computed from the length of the suffix and the
additional information stored in the prefix. The level, in turn, allows to decompose the 
suffix into the vertex name and the incident edge names, which can then be used for decoding.
We elaborate on the decoding algorithm later on.

\subsubsection{The Final Labels.}

The final label is obtained by concatenating the suffix to the prefix, namely $\la(v)$
is defined as follows.

\[
  \la(v)  = {
    \underbrace{D(v) \circ R(v) \circ Type(v)  }_\text{prefix} \circ
    \underbrace{\mathit{Id}(\phi(v)) \circ \mathit{EId}_{\phi(v)}(e_1) \circ \cdots \circ \mathit{EId}_{\phi(v)}(e_{\Delta})}_\text{suffix}.
   }
 \]

Figure~\ref{fig:labels} illustrates the label structure as a function of the level of the
vertex.
Note that $\vert \la (v) \vert =  s_p + f(t(v)) $, thus  the level $t(v)$ of $v$ 
determines $\vert \la(v) \vert $. Note that Lemma~\ref{lem:bound} and the fact that the prefix has constant
size guarantees that $|\la(v)| = \log n + O(1)$, as desired.
We also  pad each label with sufficiently many $0$'s and a single '$1$', to arrive at a uniform length.
The latter simple modification allows the decoder to  work without knowing $n$ in advance (see~\cite{Fraigniaud10} for details).

      \begin{figure}[h!]\label{fig:labels}
	\centering
	  \includegraphics[width=\textwidth]{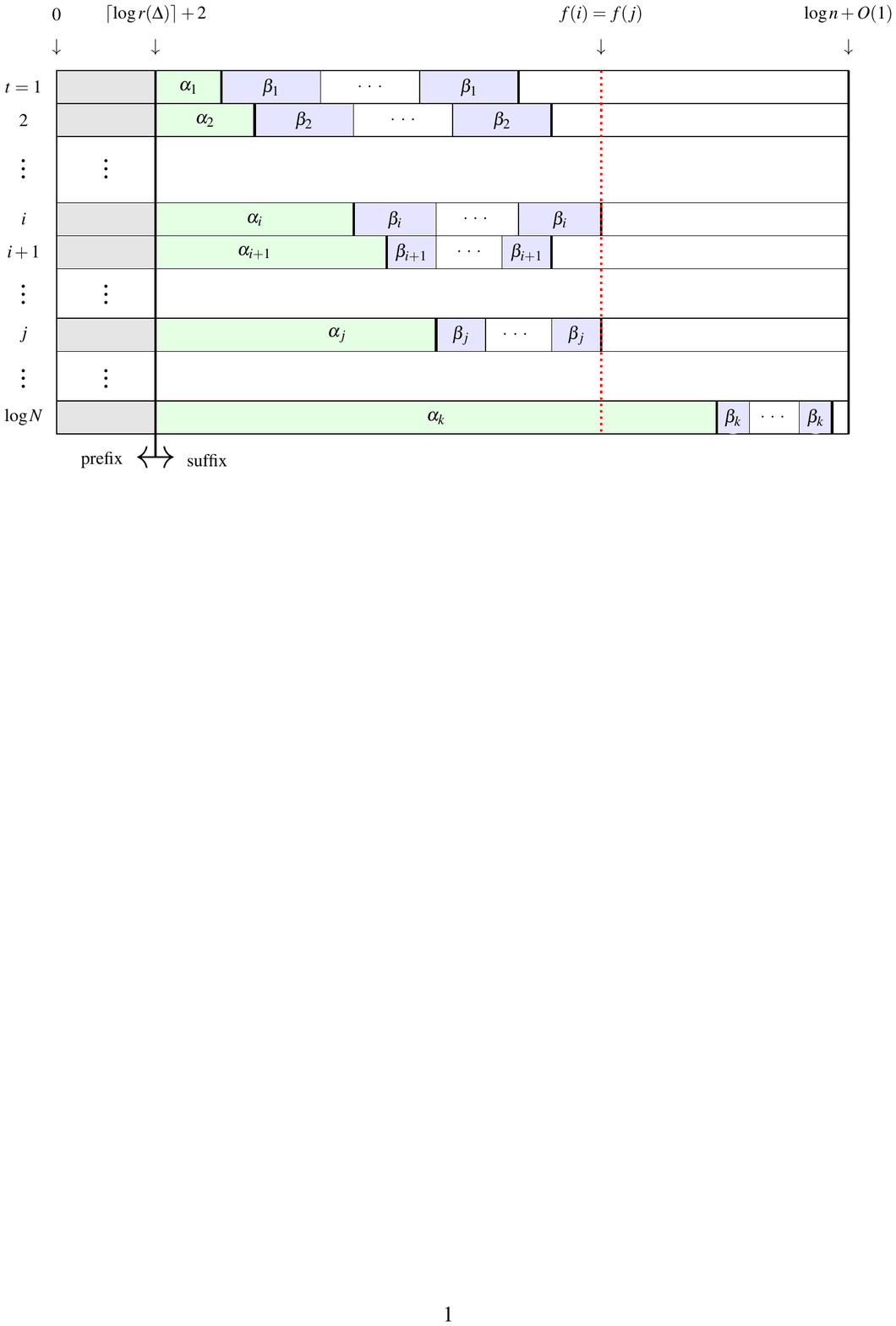}
      \caption{The composition of labels in our labeling scheme for vertices in different levels $t = 1,\cdots, \log N$. 
      The size of the prefix is seen to be constant in every level, while the fields of lengths $\alpha_t$
      and $\beta_t$, used to store vertex and edge identifiers, respectively, have variable sizes. The
      levels $i$ and $j$ comprise a collision with respect to the function $f$, thus labels of vertices in
      these levels have the same length.}
      \end{figure}

Although it is not necessary for the correctness of our labeling scheme, we prove here
uniqueness of the labels. In other words, we show that two different vertices in $G$ necessarily
get different labels.

\begin{lemma}\label{lem:uniqueness}
For every two distinct $u,v \in V(G)$ it holds that $\la(u) \neq \la(v)$.
\end{lemma}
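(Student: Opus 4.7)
The plan is to distinguish $\la(u)$ from $\la(v)$ by progressively extracting information first from the length, then from the prefix, and finally from the suffix. My first step would be to exploit the padding convention: since each label ends with a single $1$ followed by $0$'s appended after the content, scanning from the right recovers the effective content length $s_p + f(t(v))$ from any padded label. Consequently, whenever $f(t(u)) \neq f(t(v))$, the padded labels already differ at the position of the rightmost $1$, and we are done.

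It remains to treat $f(t(u)) = f(t(v))$, in which case both labels share an identical layout. I would split by whether the levels coincide. If $t(u) = t(v)$, the prefix bits $D, R, Type$ are functions of the level alone and therefore agree, and the suffix fields occupy identical bit ranges of widths $\alpha_{t(u)}$ and $\beta_{t(u)}$. Because the embedding $\phi$ is injective and $\mathit{Id}$ is injective by Definition~\ref{def:coherent}, we have $\mathit{Id}(\phi(u)) \neq \mathit{Id}(\phi(v))$, so the labels disagree inside the first field of the suffix.

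If instead $t(u) \neq t(v)$ while $f(t(u)) = f(t(v))$, I would appeal to Lemma~\ref{lem:properties} and let the prefix do the work through a clean case split: (a) if both levels are shallow, Lemma~\ref{lem:properties} forces $\{t(u), t(v)\}$ to be exactly the pre-image $f^{-1}(f(t(u)))$, so precisely one of $u, v$ is early and the other late, giving $R(u) \neq R(v)$; (b) if both are deep, then $Type(u) = \ce{\log N} - t(u) \neq \ce{\log N} - t(v) = Type(v)$; (c) if one is shallow and the other deep, then $D(u) \neq D(v)$ by definition.

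I do not anticipate any genuine technical obstacle; the main risk is bookkeeping. One must verify that the three prefix fields really cover every collision permitted by Lemma~\ref{lem:properties} — in particular, that a mixed shallow/deep collision is not accidentally masked by agreeing $R$ and $Type$ fields. But the $D$ bit is designed precisely for that scenario, so the case analysis closes without further effort.
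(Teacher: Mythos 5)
Your proof is correct and is essentially the paper's own argument run in the contrapositive direction: the paper assumes $\la(u)=\la(v)$ and uses equality of the length (hence of $f(t(u))$ and $f(t(v))$), of the prefix fields $D,R,Type$, Lemma~\ref{lem:properties}, and injectivity of $\mathit{Id}\circ\phi$ to force $u=v$, which is exactly your case split read backwards. The only addition is your explicit handling of the padding, which the paper leaves implicit; no substantive difference.
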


\subsection{Decoding}
Consider two labels $\la(u)$ and $\la(v)$ for vertices $u,v \in V$.
The decoder first extracts the levels $t(u)$ and $t(v)$ of $u$ and $v$~respectively,
using the following simple procedure, which we describe for $v$.
If $D(v)= 0$, $v$ is shallow. To this end, the decoder computes all pre-images
of the length of the suffix, $\vert \la(v) \vert - s_p$, under $f$. Recall, that by Lemma~\ref{lem:properties}, the number
of pre-images is at most two. Let $t_1 \leq t_2$ be the computed pre-images. Next, the decoder
inspects the bit $R(v)$. According to the definition of the labels, $t(v) = t_1$ if $R(v) = 0$, and
$t(v) = t_2$, otherwise.
 Consider next the case  $D(v) = 1$, namely that $v$ is deep. In this case, the decoder inspects
$Type(v)$. The level of $v$ is $t(v) = \ce{\log N} - Type(v)$, by definition of a type of a deep vertex.
It is obvious by the definition of the labels that the decoder extracts $t(v)$ correctly.
Assume next that the decoder extracted $t(u)$ and $t(v)$. The decoder can now extract $\mathit{Id}(\phi(u))$
and $\mathit{Id}(\phi(v))$, by inspecting the first $\alpha_{t(u)}$ and $\alpha_{t(v)}$ bits of the suffix
of $\la(u)$ and $\la(v)$, respectively. Next, the decoder checks if $\phi(u)\phi(v) \not\in E_n$, in which
case it reports \textbf{false}.
Finally, if $\phi(u)\phi(v) \in E_n$ the decoder scans all $\Delta$ blocks of $\beta_{t(u)}$ bits each, succeeding $\mathit{Id}(\phi(u))$
in the suffix of $\la(u)$, checking if one of them contains the edge-identifier $\mathit{EId}_{\phi(u)}(\phi(u)\phi(v))$.
If this identifier is found the decoder reports \textbf{true}. Otherwise, it reports \textbf{false}. 
The correctness of the decoding is clear from the label definition and Lemma~\ref{lem:properties}.
\begin{lemma}~\label{lem:decoding}
The decoding of the labels can be performed in time $O(\log \log n)$.
\end{lemma}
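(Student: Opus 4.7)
The plan is to walk through the decoding procedure already described in the text and argue that every step runs in $O(1)$ word-RAM operations except for the inversion of $f$ in the shallow case, which costs $O(\log\log n)$ and is the bottleneck. The overall time will therefore be $O(\log\log n)$.

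First I would handle the extraction of the level $t(v)$ from each label. Reading the constant-length prefix $D(v)R(v)\mathit{Type}(v)$ is $O(1)$. If $D(v)=1$ then $t(v)=\lceil\log N\rceil - \mathit{Type}(v)$, computable in $O(1)$. The shallow case $D(v)=0$ is the heart of the argument: given the suffix length $\ell=|\la(v)|-s_p$, we must invert $f(t)=t+(\Delta+1)\lceil\log(\log N - t)\rceil=\ell$. The key observation is that the second summand takes only $O(\log\log n)$ distinct integer values, corresponding to $m:=\lceil\log(\log N - t)\rceil \in \{1,\ldots,\lceil\log\log N\rceil\}$. I would iterate over each such $m$, set $t:=\ell-(\Delta+1)m$, and verify $\lceil\log(\log N - t)\rceil=m$. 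By Lemma~\ref{lem:properties} at most two values of $t$ survive this check, and the bit $R(v)$ selects between them. This step runs in time $O(\log\log n)$ and is the main obstacle one must handle carefully; everything after this is a constant number of word-RAM operations.

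Once $t(u)$ and $t(v)$ are available, the closed-form expressions for $\alpha_{t}$ and $\beta_{t}$ (Lemma~\ref{lem:bound}) give the widths of the sub-fields of the suffix in $O(1)$ time, so $\mathit{Id}(\phi(u)), \mathit{Id}(\phi(v))$ and the $\Delta$ edge-identifier blocks are extracted by bit-masking in $O(1)$. To decide whether $\phi(u)\phi(v)\in E_n$ we need only compare the positions in $T$ of the clusters containing $\phi(u)$ and $\phi(v)$: since the coherent naming numbers clusters level by level, left to right, the cluster index at a known level can be recovered in $O(1)$ from $\mathit{Id}$ using the closed-form prefix sums $\sum_{t'<t}2^{t'}c(k-t')$, and distance at most $g$ in the complete binary tree $T$ reduces to a constant-time longest-common-prefix computation on the root-to-cluster bit-strings. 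If this test fails the decoder already outputs \textbf{false}.

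Finally, when $\phi(u)\phi(v)\in E_n$, I would compute the expected edge identifier $\mathit{EId}_{\phi(u)}(\phi(u)\phi(v))$ directly from the two vertex identifiers: by coherence it equals the rank of $\phi(v)$ in the $\mathit{Id}$-order on neighbors of $\phi(u)$ in $H_n$, and those neighbors are contained in the $O(2^g)=O(1)$ clusters of $T$ within distance $g$ of $\phi(u)$'s cluster; summing the corresponding cluster sizes and a local offset gives the rank in $O(1)$. Scanning the $\Delta=O(1)$ edge-identifier blocks of the suffix of $\la(u)$ for a match is $O(1)$, and we output \textbf{true} iff a match is found. Aggregating, every phase is $O(1)$ except the shallow inversion, yielding the claimed $O(\log\log n)$ bound.
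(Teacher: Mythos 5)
Your proposal is correct and follows essentially the same route as the paper's proof: the inversion of $f$ for shallow vertices is isolated as the single $O(\log\log n)$ step (your enumeration over the $O(\log\log N)$ values of the ceiling term is equivalent to the paper's enumeration of candidate levels in a window of width $O(\log\log n)$ around the suffix length), and all remaining work---recovering cluster positions from the coherent naming via closed-form prefix sums, testing cluster distance at most $g$ in $\mathrm{T}$, and matching the $\Delta$ edge-identifier blocks---is constant time. The only cosmetic differences are that you test tree distance via a longest-common-prefix computation where the paper walks $g=O(1)$ steps toward the root and checks for the LCA, and that you spell out the rank computation for the expected edge identifier, a detail the paper explicitly omits as ``similar.''
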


\subsection{Computing  the Embedding $\phi$} 
All the labels can clearly be computed from the graph $G$, the embedding $\phi$
and the graph $H_n$ in time $O(n \log n)$. It is also straightforward to compute $H_n$ in $O(n)$
time. 
It remains to discuss how to compute an embedding $\phi$, for which we  provide a high-level overview.
For a detailed description, see Bhatt et~al.~\cite{BCLR}.

 The algorithm uses a subroutine for computing bisectors of a graph.
A \emph{bisector} of a graph $G= (V,E)$ is a set $S\subset V$ of vertices
with the property that the connected components of the graph $G-S$ can be partitioned into two
parts, such that the sum of the vertices in each part is the same, and no edge connects two vertices 
in different parts. 
If $S$ is a bisector in $G$ we say that $S$ \emph{bisects} $V\setminus S$. 

 Given a $k$-coloring  $V = V_1 \cup \cdots \cup V_k$ of $V$
(for some $k\in \mathbb{N}$), one can define a \emph{$k$-bisector} of $G$ as a set $S\subset V$ 
that bisects \emph{every} color class, namely one that bisects $V_i \setminus S$ for all $i\in [k]$.
An important property of outerplanar graphs is that they admit $O(\log n)$ size $k$-bisectors, for every fixed $k$. 

The algorithm works by assigning vertices in the graph $G$ to clusters in $\mathrm{T}$.
The root of $\mathrm{T}$ is assigned up to  $c \log n$ vertices that form a bisector of $G$ with parts  $G_1, G_2$.
In the next iteration, vertices adjacent to vertices in the bisector are given a new color. Next,
two new $2$-bisectors are found, one in each part $G_1,G_2$, and they are assigned to the  corresponding decedents of the root of $\mathrm{T}$.

Let $k(\Delta) = \log \Delta + 1$. 
 In general, the vertices stored at a vertex of $\mathrm{T}$ at level $t$ correspond to a $k(\Delta)$-bisector. 
 The colors of this bisector  correspond to the neighbors  of vertex-sets stored at $k(\Delta)-1$ nearest ancestors of the current vertex in $\mathrm{T}$.
 The last color is reserved to the remaining vertices.
 Also stored in this vertex are all neighbors of the
vertex-set stored in the ancestor of the current vertex at distance exactly $k(\Delta)$,
that were not yet assigned to some other cluster. We refer the reader to~\cite{BCLR} for an
analysis of the sizes of clusters. 

Let $T(n)$ be the running time of the latter algorithm in a graph with $n$ vertices.
$T(n)$ clearly satisfies $T(n) \leq 2 T(n/2) + O(h(n))$, where $h(n)$ is the complexity
of finding an $O(1)$-bisector of $O(\log n)$ size in an $n$-vertex graph. 
For outerplanar graphs the latter can be done in linear time~\cite{chung1989separator,BCLR}, thus the labels
of our labeling scheme can be computed in $O(n \log n)$ time.

\subsection{Improvements and Special Cases}
Several well-known techniques can be easily applied on top of our construction to 
reduce the additive constant in the label size.
 First, since graphs of maximum degree $\Delta$ have arboricity $\fl{\frac{\Delta}{2}}+1$, one can reduce the number of edge identifiers stored in each label to the latter number (see Kannan et~al.~\cite{kannan1988implicit}). 
We later show a simpler procedure that works for bounded degree graphs.

Finally, for bounded-degree trees $\mathcal{T}(n,\Delta)$, it suffices to store a single edge identifier 
(corresponding to the edge connecting a vertex to its parent in $G$).  We summarize this result in the following corollary of Theorem~\ref{thm:main}.

\begin{corollary}\label{cor:trees}
For every fixed $\Delta \geq 1$, the class $\mathcal{T}(n,\Delta)$ of bounded-degree trees admits 
a labeling scheme of size $\log n + O(1)$, with encoding complexity $O(n \log n)$
and decoding complexity $O(\log \log n)$.
\end{corollary}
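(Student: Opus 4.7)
The plan is essentially to invoke Theorem~\ref{thm:main} and observe that no new work is required for the asymptotics. Since every tree is outerplanar, we have the inclusion $\mathcal{T}(n,\Delta) \subseteq \mathcal{O}(n,\Delta)$, so applying the labeling scheme of Section~\ref{sec:outerplanar} to a bounded-degree tree $G$ already yields labels of size $\log n + O(1)$ with encoding complexity $O(n \log n)$ and decoding complexity $O(\log \log n)$. This alone establishes the corollary as stated.

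To tighten the hidden additive constant (as suggested in the paragraph preceding the corollary), I would then describe the tree-specific specialization of the suffix. Since trees are rooted, adjacency between $u$ and $v$ in $G$ is equivalent to one of them being the parent of the other. Therefore each vertex $v \in V(G)$ needs to communicate only the single edge $v\,\mathrm{parent}(v)$ rather than all incident edges, and the suffix may be shortened to
\[
\mathit{Id}(\phi(v)) \circ \mathit{EId}_{\phi(v)}(\phi(v)\phi(\mathrm{parent}(v))).
\]
The associated length function drops from $f(t) = \alpha_t + \Delta \beta_t$ to $\tilde f(t) = \alpha_t + \beta_t$, and Lemma~\ref{lem:bound} still yields $|\la(v)| = \log n + O(1)$.

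The remaining components of Section~\ref{sec:outerplanar} transfer almost verbatim: one reuses the prefix construction and the ambiguity-resolution argument after substituting $\tilde f$ for $f$ in the statement of Lemma~\ref{lem:properties}; the collision analysis is actually easier because $\tilde f$ grows more slowly, so one only needs to redo the count with a possibly different constant $r(\Delta)$. The decoder is likewise simpler---after recovering the level $t(u)$ and $\mathit{Id}(\phi(u))$ it inspects the single stored edge identifier of $u$ (and symmetrically of $v$) and compares with $\mathit{EId}_{\phi(u)}(\phi(u)\phi(v))$---and runs in $O(\log\log n)$ time by Lemma~\ref{lem:decoding}. The encoder is unchanged, the only nontrivial subroutine being the $O(n \log n)$ embedding $\phi$, which applies to outerplanar graphs in general.

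The only point requiring any care is re-verifying the analogue of Lemma~\ref{lem:properties} for $\tilde f$ so that the constant-size prefix still resolves all ambiguity in suffix lengths; this is routine and presents no genuine obstacle. I would therefore present the proof as a one-line reduction to Theorem~\ref{thm:main} followed by a short paragraph carrying out this optimization to shrink the $O(1)$ term.
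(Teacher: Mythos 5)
Your proposal is correct and takes essentially the same route as the paper: the corollary follows immediately from Theorem~\ref{thm:main} because $\mathcal{T}(n,\Delta) \subseteq \mathcal{O}(n,\Delta)$, and the paper's only added remark is exactly your optimization of storing a single edge identifier for the parent edge to reduce the additive constant. Your extra care about re-checking the analogue of Lemma~\ref{lem:properties} for the shortened length function is reasonable but not something the paper spells out.
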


%

\section{Labeling schemes for $\mathcal{G}(n,\Delta)$ and  $\mathcal{G}(n,\Delta(n))$  }\label{sec:combinadics}
First we note that Theorem~\ref{thm:main} implies almost directly a 
	$\lceil \Delta /2 \rceil \log n$ labeling scheme for graphs of fixed bounded degree $\Delta$.
	The result follows  from the technique of Alstrup and  Rauhe~\cite{alstrup2002small}, Lemma~\ref{lem:uniqueness}   
	and the fact that any subtree of a bounded degree graph has bounded degree.
	\begin{corollary}\label{cor:graphs}
		For every  $\Delta \geq 1$, the class $\mathcal{G}(n,\Delta)$ of bounded-degree graphs admits 
		labeling schemes of size $\lceil \Delta /2 \rceil \log n$, with encoding complexity $O(n \log n)$
		and decoding complexity $O(\log \log n)$.
	\end{corollary}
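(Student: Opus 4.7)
The plan is to reduce adjacency in a bounded-degree graph to adjacency in a constant number of bounded-degree forests, following the standard technique of Alstrup and Rauhe, and then apply Corollary~\ref{cor:trees} to each forest. Concretely, for any $G \in \mathcal{G}(n,\Delta)$, I would first invoke the Nash-Williams theorem, together with the bound $k(\Delta) = \lceil \Delta/2 \rceil$ on the arboricity of a graph with maximum degree $\Delta$, to decompose $E(G)$ into $k := \lceil \Delta/2 \rceil$ edge-disjoint forests $F_1,\ldots,F_k$. Since each $F_i$ is a subgraph of $G$, it also has maximum degree at most $\Delta$, so every $F_i$ lies in $\mathcal{T}(n,\Delta)$ (after rooting each of its trees arbitrarily). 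This decomposition is computable in polynomial time by standard matroid-partition or orientation-based algorithms, which comfortably fits within the $O(n\log n)$ encoding budget when $\Delta = O(1)$.

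For the labels themselves, I would apply Corollary~\ref{cor:trees} to each forest $F_i$ independently to obtain a labeling $\la_i : V(G) \to \{0,1\}^*$ of length $\log n + O(1)$, all padded to a common length $L$ as prescribed in Section~\ref{sec:outerplanar}. The final label is the concatenation
\[
\la(v) \;=\; \la_1(v) \circ \la_2(v) \circ \cdots \circ \la_k(v),
\]
whose length is $kL = \lceil \Delta/2 \rceil \log n + O(1)$; the additive constant is absorbed since $\Delta$ is fixed. The decoder, given $\la(u)$ and $\la(v)$, splits each into $k$ equal-length blocks (unambiguous because $L$ is fixed across all labels in a given forest's scheme), and invokes the tree decoder of Corollary~\ref{cor:trees} on each pair $(\la_i(u), \la_i(v))$. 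It returns \textbf{true} iff at least one invocation does; correctness holds because $uv \in E(G)$ precisely when $uv \in E(F_i)$ for some $i$. Running $k = O(1)$ instances of the tree encoder and decoder yields the claimed $O(n\log n)$ encoding and $O(\log\log n)$ decoding complexities.

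The role of Lemma~\ref{lem:uniqueness} in this argument is to guarantee that within each forest labeling $\la_i$, distinct vertices receive distinct labels, so the per-forest block unambiguously identifies its endpoint to the tree decoder; without this, concatenation could conflate adjacencies across forests. The main (mild) obstacle is aligning the blocks so that the decoder can split $\la(u)$ and $\la(v)$ correctly, but this is handled for free by the uniform padding already built into Corollary~\ref{cor:trees}. No new structural ideas beyond Nash-Williams, the fact that subgraphs of bounded-degree graphs are bounded-degree, and Lemma~\ref{lem:uniqueness} are needed.
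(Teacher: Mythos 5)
Your proposal is correct and follows essentially the same route as the paper, which justifies Corollary~\ref{cor:graphs} only by a one-line appeal to the Alstrup--Rauhe technique (Nash-Williams decomposition into $\lceil \Delta/2\rceil$ bounded-degree forests), Lemma~\ref{lem:uniqueness}, and Corollary~\ref{cor:trees} applied to each forest -- precisely the argument you flesh out. The only soft spot, which the paper shares, is that ``polynomial-time'' matroid partition does not by itself justify the $O(n\log n)$ encoding bound, but this does not distinguish your argument from the paper's.
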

	
	From here on, we discuss labeling schemes for graphs of non-constant bounded degree $k= \Delta(n)$.
	Adjacency relation between any two vertices may be reported in only one of the labels representing them.
	 For  bounded degree graphs, the method of Kannan et~al.~\cite{kannan1988implicit} of decomposition into forests can be replaced with a simpler procedure, using Eulerian circuits, as we prove in the following.
	\begin{lemma}\label{THM:split}
		Let  $G=(V,E)$ be a graph  with degree bounded by $k$.
		There exist an adjacency labeling scheme for $G$ of size $ (\lceil \frac{k}{2} \rceil+1) \log n$.
	\end{lemma}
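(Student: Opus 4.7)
The plan is to produce an orientation of $G$ in which every vertex has out-degree at most $\lceil k/2 \rceil$, and then let each vertex's label store its own identifier together with the identifiers of its out-neighbours. An edge $uv$ is oriented either from $u$ to $v$ or from $v$ to $u$, so adjacency is determined by checking whether $\text{Id}(v)$ appears in the out-neighbour list of $u$ or vice versa. Since we need $\log n$ bits for the vertex's own identifier and $\lceil k/2 \rceil$ out-neighbour slots, each padded to $\log n$ bits, the resulting label has size exactly $(\lceil k/2 \rceil + 1)\log n$, as required.

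To construct the orientation I would use the Eulerian-circuit technique. First fix an arbitrary assignment of identifiers in $[n]$ to the vertices of $V$. Introduce a single auxiliary vertex $v_0$ and add an edge from $v_0$ to every vertex of odd degree in $G$; call the resulting graph $G'$. By construction every vertex of $G'$ has even degree, so each connected component of $G'$ admits an Eulerian circuit. I would compute these circuits and orient each edge in the direction it is traversed; in any Eulerian orientation the in-degree of each vertex equals its out-degree. Hence every original vertex $v$ has out-degree $\deg_{G'}(v)/2$ in $G'$, which is $\lceil \deg_G(v)/2 \rceil \leq \lceil k/2 \rceil$, where at most one of these outgoing edges goes to the dummy vertex $v_0$. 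Discarding $v_0$ and its incident edges then yields an orientation of $G$ with out-degree bounded by $\lceil k/2 \rceil$ at every vertex, while every edge of $G$ is oriented exactly once.

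Given this orientation, I would define the label of $v$ as the concatenation
\[
\la(v) = \text{Id}(v) \circ \text{Id}(u_1) \circ \cdots \circ \text{Id}(u_{\lceil k/2\rceil}),
\]
where $u_1,\ldots,u_d$ are the out-neighbours of $v$ and the remaining slots are padded with $\text{Id}(v)$ itself (so that the padding cannot be mistaken for a real neighbour, since an orientation has no self-loops). The decoder receives $\la(u)$ and $\la(v)$, extracts the two vertex identifiers from the first $\log n$ bits of each label, and scans the $\lceil k/2 \rceil$ subsequent blocks for matches. It returns \textbf{true} iff $\text{Id}(v)$ appears among $u$'s out-neighbours, or $\text{Id}(u)$ appears among $v$'s out-neighbours. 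Correctness is immediate because each edge of $G$ receives exactly one orientation in the construction above.

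The only subtlety, which I would treat carefully, is the bookkeeping around the dummy vertex $v_0$: one must verify that inserting $v_0$ and the parity-correcting edges does not inflate the out-degree of any original vertex by more than one, so that the $\lceil k/2 \rceil$ bound holds after $v_0$ is deleted. This follows because $v$ acquires a new edge in $G'$ only if $\deg_G(v)$ is odd, in which case $\deg_{G'}(v) = \deg_G(v)+1$ and the Eulerian orientation produces $(\deg_G(v)+1)/2 = \lceil \deg_G(v)/2 \rceil$ outgoing edges in $G'$ at $v$. Encoding reduces to finding Eulerian circuits, which can be done in linear time, and decoding requires only a constant number of word comparisons in the $\Omega(\log n)$ word-RAM model.
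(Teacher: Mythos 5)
Your proof is correct and follows essentially the same route as the paper: make all degrees even, orient the edges along an Eulerian circuit so that every vertex has out-degree at most $\lceil k/2 \rceil$, and store only the out-neighbours. The sole (immaterial) difference is that you repair odd degrees with a single auxiliary vertex joined to all odd-degree vertices, whereas the paper adds a matching on the odd-degree vertices; both yield the same bound, and your treatment of the decoder and of the discarded auxiliary edges is, if anything, more explicit than the paper's.
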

	The current best labeling schemes for graphs works in two modes, according to the range of $k$.
	If $k \leq n/ \log n$, a $k/2 \log n $ labeling scheme can be achieved~\cite{kannan1988implicit}, essentially  by encoding an adjacency list.\
	For larger $k$, labels defined through the adjacency matrix of the graph have size $n/2+ \log n$~\cite{moon1965minimal}.
	Our improved labels use the well-known \emph{combinatorial number system} (see e.g.~\cite{knuth2011combinatorial}).	
		\begin{lemma}   \label{lemma:combinadics}
			 Let $L = \binom{n}{k} $.
			There is a bijective mapping $\sigma: S_k\rightarrow [0,L-1]$ between  the set of strictly increasing sequences $S_k$ of the form $0\leq t_1  < t_2  \dots < t_k < n$ and $[0,L-1]$  given  by
			 $$ \sigma(t_1,\cdots,t_k) =  \sum_{i=1}^k \binom{t_i}{i}.$$
		\end{lemma}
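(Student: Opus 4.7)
The plan is to prove this classical combinatorial number system fact in three steps: verify that $\sigma$ lands in the correct range, exhibit a greedy inverse, and combine these with a counting argument to conclude that $\sigma$ is a bijection.

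First I would observe that $|S_k| = \binom{n}{k} = L = |[0, L-1]|$, so it suffices to prove either injectivity or surjectivity of $\sigma$. My approach is to establish both at once by constructing an explicit inverse. The key technical ingredient is the two-sided estimate
\[
\binom{t_k}{k} \;\leq\; \sigma(t_1,\dots,t_k) \;\leq\; \binom{t_k+1}{k} - 1
\]
for every $(t_1,\dots,t_k)\in S_k$. The lower bound is immediate since all summands are nonnegative. For the upper bound, strict monotonicity gives $t_i \leq t_k - (k-i)$, so $\binom{t_i}{i}\leq \binom{t_k-k+i}{i}$; summing and applying the hockey stick identity $\sum_{i=0}^{k}\binom{(t_k-k)+i}{i} = \binom{t_k+1}{k}$ yields the claimed bound (after subtracting the $i=0$ term).

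Next I would define a greedy decoder $\psi:[0,L-1]\to S_k$ and verify it inverts $\sigma$. Given $N\in [0,L-1]$, let $t_k$ be the largest integer with $\binom{t_k}{k}\leq N$; the estimate above shows this is exactly the $k$-th coordinate of any preimage of $N$ under $\sigma$, giving uniqueness. Setting $N' := N - \binom{t_k}{k}$, the upper bound gives $N' < \binom{t_k+1}{k} - \binom{t_k}{k} = \binom{t_k}{k-1}$, so recursively applying the same procedure with parameter $k-1$ produces $t_{k-1}$ satisfying $\binom{t_{k-1}}{k-1}\leq N' < \binom{t_k}{k-1}$, forcing $t_{k-1} < t_k$. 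Iterating produces a strictly increasing sequence $(t_1,\dots,t_k)$, and one must also check $t_k < n$: since $\binom{t_k}{k}\leq N \leq L - 1 < \binom{n}{k}$ and $\binom{\cdot}{k}$ is strictly increasing on integers $\geq k$, this yields $t_k < n$ (the low-end case $t_k < k$ is handled by the convention $\binom{t_k}{k}=0$).

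The main obstacle I anticipate is the clean handling of the hockey stick step, particularly making sure the index shift and the base case $i=0$ are treated correctly so that the upper bound is sharp; the rest is essentially bookkeeping, since once the two-sided estimate is in hand, the inverse is forced and bijectivity follows from $|S_k|=L$.
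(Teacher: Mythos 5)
Your proof is correct. Note that the paper itself gives no proof of this lemma: it is stated as a classical fact (the combinatorial number system) with a citation to Knuth, and no argument appears in the appendix. What you have written is essentially the standard textbook proof of that fact, and all the steps check out: the two-sided estimate $\binom{t_k}{k}\leq \sigma(t_1,\dots,t_k)\leq \binom{t_k+1}{k}-1$ is established correctly (the bound $t_i\leq t_k-(k-i)$ from strict monotonicity, followed by the hockey-stick identity with the $i=0$ term removed), Pascal's rule correctly turns the residual bound into $N'<\binom{t_k}{k-1}$ so the recursion both terminates and forces strict decrease of the recovered indices, and the range check $t_k<n$ together with $|S_k|=\binom{n}{k}=L$ closes the bijection. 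The only edge cases worth a sentence in a final write-up are $N=0$ (where the greedy choice is $t_k=k-1$ and the recovered sequence is $(0,1,\dots,k-1)$) and the degenerate case $n<k$ (both sets empty), but these are consistent with the conventions you already invoke.
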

	We use Lemma~\ref{lemma:combinadics} to prove the following theorem.	
	For this purpose we assume that the labeling scheme presented in  Lemma~\ref{THM:split} returns a subset of vertices for every vertex $v \in V$ according to the partition instead of a final label.
	\begin{theorem}\label{thm:combina}
	For   $1 \leq k \leq n$, there exist an adjacency labeling scheme  for  $\mathcal{G}(n,k)$ with  size:
	$ \binom{n}{\lceil k / 2 \rceil} + \lceil \log n \rceil + \lceil \log k \rceil $.
	\end{theorem}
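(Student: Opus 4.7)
The plan is to combine the edge-partitioning lemma (Lemma~\ref{THM:split}) with the compact encoding of small subsets afforded by the combinatorial number system (Lemma~\ref{lemma:combinadics}). Interpreting the theorem's size bound as $\log\binom{n}{\lceil k/2\rceil} + \lceil\log n\rceil + \lceil\log k\rceil$, each of the three summands will correspond to a distinct field of the label.

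First I would fix an arbitrary numbering of the vertices, identifying $V$ with $[0,n-1]$, and apply Lemma~\ref{THM:split} in its ``set-valued'' form: for every $v\in V$ this produces a set $N^+(v)\subseteq V$ with $|N^+(v)|\le \lceil k/2\rceil$ such that for every edge $uv\in E$ exactly one of the memberships $u\in N^+(v)$ or $v\in N^+(u)$ holds. The label of $v$ will then be the concatenation of three fields: the vertex identifier $\mathit{Id}(v)\in[0,n-1]$ written in $\lceil\log n\rceil$ bits; the actual cardinality $k_v := |N^+(v)|$ written in $\lceil\log(\lceil k/2\rceil + 1)\rceil \le \lceil\log k\rceil$ bits; and finally the image $\sigma(t_1,\dots,t_{k_v})$ of Lemma~\ref{lemma:combinadics} applied to the increasing enumeration of $N^+(v)$, padded to exactly $\lceil\log\binom{n}{\lceil k/2\rceil}\rceil$ bits.

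For the size bound I would argue that since $\lceil k/2\rceil\le \lceil n/2\rceil$ and $\binom{n}{\cdot}$ is unimodal with maximum at $\lfloor n/2\rfloor$, we have $\binom{n}{k_v}\le\binom{n}{\lceil k/2\rceil}$ for every vertex $v$, so the combinadic image fits in the allocated field. The decoder, given $\la(u)$ and $\la(v)$, first reads $\mathit{Id}(u),\mathit{Id}(v),k_u,k_v$ and the two combinadic integers; it then inverts Lemma~\ref{lemma:combinadics} by the standard greedy algorithm — repeatedly picking the largest $t_i$ with $\binom{t_i}{i}\le$ (current residue) — to recover $N^+(u)$ and $N^+(v)$; adjacency is reported iff $\mathit{Id}(v)\in N^+(u)$ or $\mathit{Id}(u)\in N^+(v)$. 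Correctness follows directly from the partition property guaranteed by Lemma~\ref{THM:split}.

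The main obstacle is ensuring that the combinadic field genuinely fits into $\log\binom{n}{\lceil k/2\rceil}$ bits across all vertices simultaneously, despite the $k_v$'s varying: this is precisely what the inequality $\binom{n}{k_v}\le\binom{n}{\lceil k/2\rceil}$ takes care of, but it relies on the range $k\le n$ assumed in the theorem so that $\lceil k/2\rceil$ stays on the left half of Pascal's row. A secondary subtlety is that the label must be \emph{self-parsing}, which is why $k_v$ is stored explicitly in the middle field; with its value in hand, the decoder knows exactly which initial segment of the third field to feed into the greedy inversion. The encoding complexity is $O(n)$ (linear in the total number of set-membership pairs produced by Lemma~\ref{THM:split}), and the decoding runs in $O(k\log n)$ time, matching the entry claimed in Table~\ref{tab:current}.
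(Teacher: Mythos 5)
Your proposal is correct and follows essentially the same route as the paper: apply the Euler-circuit half-neighborhood of Lemma~\ref{THM:split}, encode the resulting set via the combinatorial number system of Lemma~\ref{lemma:combinadics}, and append the vertex identifier and the cardinality field for self-parsing. The only divergence is that the paper additionally distinguishes vertices of degree at least $2n/3$ (encoding the complement $\bar C$ of the adjacency vector, with an extra flag bit), whereas you handle all vertices uniformly via the unimodality bound $\binom{n}{k_v}\le\binom{n}{\lceil k/2\rceil}$ for $k_v\le\lceil k/2\rceil\le\lceil n/2\rceil$ --- which suffices for the stated bound, so the case split is not needed in your version.
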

	The labeling scheme suggested  in Theorem~\ref{thm:combina}  implies a  label size of approximately $(k+2)\log n$ bits, when $k$ is small and $ \Theta(n)$ when  $k = \Theta(n)$.
	The following lemma  identifies the range of $k$ for which our labeling scheme improves on the best known bounds.
	\begin{lemma}\label{lemma:Mathias}
		For $(e+1)\sqrt{n} \leq k \leq \frac{n}{5}$, and  $ f(n,k) = \binom{n}{\lceil k / 2 \rceil} + \log k + \log n$ 
		it holds that 
		\begin{inparaenum}[\itshape a\upshape)]  \item $f(n,k) < \frac{n}{2}$; and  \item $f(n,k)< \lceil k/2  \rceil+2 \log(n)$. \end{inparaenum}
	\end{lemma}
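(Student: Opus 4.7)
The plan is to reduce both inequalities to standard estimates of the binomial coefficient $\binom{n}{m}$ with $m := \lceil k/2 \rceil$, applying the entropy bound in one range and a Stirling-based bound in the other. Throughout I will treat $f(n,k)$ as $\log\binom{n}{m} + \log k + \log n$, in line with Theorem~\ref{thm:combina}.

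For part (a), the constraint $k \leq n/5$ gives $m/n \leq 1/10 + O(1/n)$. The plan is to apply the entropy upper bound $\binom{n}{m} \leq 2^{nH(m/n)}$, where $H(p) = -p\log p - (1-p)\log(1-p)$ is the binary entropy function. Since $H$ is monotone increasing on $[0,1/2]$ and $H(1/10) < 0.47$, this yields $\log\binom{n}{m} < 0.47\, n$ for $n$ large enough. The additive $\log k + \log n = O(\log n)$ terms are absorbed by the constant gap $1/2 - H(1/10) > 0.03$, giving $f(n,k) < n/2$.

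For part (b), the plan is to use $\binom{n}{m} \leq n^m/m!$ combined with the Stirling lower bound $m! \geq (m/e)^m$, producing $\log\binom{n}{m} \leq m\log(en/m) = (k/2)\log n - (k/2)\log(k/(2e)) + O(1)$. Thus the difference between the Alstrup--Rauhe quantity $\lceil k/2\rceil \log n$ and $\log\binom{n}{m}$ is at least $(k/2)\log(k/(2e)) - O(1)$, a genuine saving. The hypothesis $k \geq (e+1)\sqrt n$ is calibrated precisely so that $k/(2e) \geq \sqrt n\,(e+1)/(2e) > \sqrt n / 2$, hence $\log(k/(2e)) \geq (\log n)/2 - O(1)$, and the saving is at least $(k/4)\log n - O(k) = \Omega(\sqrt n \log n)$. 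This comfortably dominates the residual $\log k + \log n$ overhead in $f(n,k)$ and establishes (b).

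The main technical difficulty is purely clerical: managing ceilings and constants so that both inequalities are strict over the entire stated range rather than only asymptotically. For (a), the slack $1/2 - H(1/10) \approx 0.03$ is small, so one should verify the inequality at the endpoint $k = n/5$ and check that the $O(\log n)$ terms are dominated for the relevant sizes of $n$. For (b), the constant $e+1$ is tuned to the Stirling constant $e$ so that $\log(k/(2e)) \geq (\log n)/2 - O(1)$ holds exactly at the boundary $k = (e+1)\sqrt n$; any strictly smaller constant would leave a saving of only $o(\sqrt n \log n)$, insufficient to absorb the additive overhead in $f$.
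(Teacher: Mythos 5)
Your proposal is correct in substance. For part (a) it coincides with the paper's argument: the paper's Stirling computation $\log\binom{n}{n/10}\sim \frac{1}{10}\log\bigl(10^{10}/9^{9}\bigr)\cdot n \approx 0.469\,n$ is exactly your entropy bound $nH(1/10)<0.47\,n$, and both arguments then absorb the $O(\log n)$ overhead into the $0.03\,n$ gap (both are asymptotic; as you note, the inequality is genuinely tight near the smallest $n$ for which the range $[(e+1)\sqrt n, n/5]$ is nonempty, so the endpoint check you flag is not optional). For part (b) you take a genuinely different and more elementary route. The paper performs a full Stirling expansion of $\frac{k}{2}\log n-\log\binom{n}{k}$ and locates the sign-change threshold $k_n=e\sqrt n+O(\log n)$, the hypothesis $k\ge(e+1)\sqrt n$ being there to clear it; you instead apply $\binom{n}{m}\le (en/m)^m$ with $m=\lceil k/2\rceil$ and observe that the saving $m\log(m/e)=\Omega(\sqrt n\log n)$ swamps the additive $2\log n$. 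Your version has the advantage of operating directly on the quantity $\binom{n}{\lceil k/2\rceil}$ that actually appears in the lemma (the paper's computation is carried out for $\binom{n}{k}$ against $\frac{k}{2}\log n$, which is where the constant $e$ truly arises), and it is much shorter. The one inaccuracy is your closing calibration remark: for the inequality you prove, the constant $e+1$ is not forced --- $\log(k/(2e))\ge\frac{1}{2}\log n-O(1)$ holds for $k\ge c\sqrt n$ with any fixed $c>0$, and indeed the saving $m\log(m/e)$ already exceeds $2\log n$ once $m$ is slightly larger than $\log n$. The threshold $e\sqrt n$ is sharp only for the paper's comparison of $\log\binom{n}{k}$ with $\frac{k}{2}\log n$, which your argument bypasses; what the paper's heavier analysis buys is that sharp threshold, while yours buys a short, self-contained verification over the stated range.
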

We conclude from Lemma~\ref{lemma:Mathias} that our labeling scheme is preferable to
\cite{moon1965minimal,butler2009induced} for graphs of 
bounded degree $k$ for $(e+1) \sqrt{n} \leq k \leq \frac{n}{5}$.

%
%
%



\label{Bibliography}
\bibliographystyle{plain}

\appendix

\section{Missing Proofs from Section~\ref{sec:outerplanar}}\label{apx:proofs}

\subsection{Proof of Lemma~\ref{lem:bound}}\label{apx:bound}

	We start with the following simple observation stating a couple of facts
	about the size of different parts of $H_n$. We let $V_n^t \subset V_n$ denote the set of vertices in $H_n$
	with level at most $t$. This properties will provide bounds on $\alpha_t$ and $\beta_t$.
		\begin{property}\label{prop:simplebounds}
			The following properties hold for every level $t$ and every vertex 
			$v$ in level $t$.
			\begin{enumerate}[i.]
				\item $|V_n^t| \leq c2^{t}(\log N - t + 1)$.
				\item $|\Gamma(v)| \leq 6c\Delta^2(\log N - t + g(\Delta))$.
			\end{enumerate}
		\end{property}

	\begin{proof}
	\textit{i.} Note that the number of vertices in level $i$
	is $c2^{i-1}(\log N - i)$, thus we obtain 
	$$
	|V_n^t| = \sum_{i=1}^t 2^{i-1}c(\log N - i) \leq c2^{t}\log N - c\sum_{i=1}^t 2^{i-1}i.
	$$
	Using $\sum_{i=1}^t 2^{i-1}i = t2^{t+1} - (t+1)2^t +1 \geq 2^t(t-1)$ 
	and substituting in the expression above we obtain the desired result.
	
	\textit{ii.} We notice that, by definition of $H_n$, the set of neighbors 
	of $v$ are exactly all vertices whose cluster is at distance at most 
	 $g(\Delta)$ from the cluster of $v$. It follows that every such cluster
	has at most $c\log (N/2^{t-g(\Delta)}) = O(\log N - t + g(\Delta))$ vertices. 
	Finally, notice that the number of clusters that are at distance at most
	$g(\Delta)$ away from a given cluster is at most $3\cdot 2^{g(\Delta)-1} = 6\Delta^2$. Putting things 
	together we obtain a bound of $6\Delta^2(\log N - t + g(\Delta))$, as desired. $\qed$
	\end{proof}


	  
	  We can now use Property~\ref{prop:simplebounds} to prove the bounds
	  $$
	  \alpha_t \leq t + \ce{\log (\log N -t)} + \log c + 2.
	  $$
	  and
	  $$
	  \beta_{t} \leq \ce{\log(\log N-t)} + 2\log \Delta + \log c + 4.
	  $$

	  We turn to proving the bound on $\alpha_t + \Delta \beta_t$ for an arbitrary $t\leq \log N$.
	  We drop the additive terms $\log c + 2$ and $2\log \Delta + \log c + 4$ 
	  as they do not depend on $n$, and hence only contribute a constant additive term.
	  Substituting the bounds above for $\alpha_t$ and $\beta_t$ and 
	  defining $Q = \log N - t(v)$ we obtain
	  $$
	  \alpha_t + \Delta \beta_t = t + (\Delta+1) \log (\log N-t) = \log N - Q + (\Delta+1) \log \log Q.
	  $$
	  Using the fact that $(\Delta +1)\log \log Q - Q \leq 0$ whenever $Q \geq (\Delta  +1)\log (\Delta +1)$ and 
	  $(\Delta  +1)\log \log Q - Q \leq (\Delta  +1)\log \log (\Delta +1)$ whenever $Q < (\Delta +1)\log (\Delta +1)$, 
	  we have $\alpha_t + \Delta \beta_t \leq \log N + (\Delta +1)\log\log (\Delta +1) = \log n + O(1)$.
	  This concludes the proof of the lemma.

\subsection{Proof of Lemma~\ref{lem:properties}}\label{apx:properties}
Consider first the behavior of the term $h(t) = (\Delta+1)\ce{\log(\log N - t)}$ in the definition
	of $f$. Let $t_1 \in [D]$ be the smallest integer such that $\log(\log N - t_1)$ is integer. 
	It follows that $\log N - t_1 = 2^s$ for some integer $s$. Consider the smallest integer $t_2\in D$ with $t_2 > t_1$ 
	such that $\log(\log N - t_2)$ is integer. This value clearly satisfies $\log N - t_2 = 2^{s-1}$.
	The smallest integer $t_3\in D$ with $t_3 > t_2$ such that $\log(\log N - t_3)$ is integer is
	$\log N - t_3 = 2^{s-2}$ and so on. Let $t_1 < t_2 < \cdots < t_p$ be the obtained sequence of
	$p = O(\log \log N)$ points that satisfies $t_i = \log N - 2^{s-i+1}$ for $i\in [p]$.
	We call such points \emph{jumps}. See Figure~\ref{fig:TwoNumbersImage} for an illustration of $f$.
	Observe next that for every $k\in [p-1]$ the term $h(t)$ is constant within the interval 
	$I_k = [t_k +1, t_{k+1}]$, while at the point $t_{k+1}+1$ it decreases by $\Delta+1$. 
	
	We rewrite $f$ as
	$$
	f(t) = t - \sum_{k\,:\, t_k \leq t-1} (\Delta+1) + (\Delta+1)\ce{\log(\log N - 1)},
	$$
	The last term does not depend on $t$, thus it suffices to prove the claim  for $\hat f(t) = t - \sum_{k\,:\, t_k \leq t-1} (\Delta+1)$.
	 Consider next some  $t\in D$ and assume that $\hat f(r) \neq \hat f(t)$ for every $r < t$.
	Define $K = \ce{\log N} - t$, and let $t_i \geq t$ be the smallest jump larger or equal to $t$.
	Define $\gamma = t_i-t$.
	Observe that the number of jumps  within the interval $[t_i, \ce{\log N}]$, whose length is $K - \gamma$ is  at most $\log (K- \gamma) + 1$, while the next jump occurring after $t_i$
	appears exactly in the middle of the latter interval, namely  $t_{i+1} - t_i = \frac{K-\gamma}{2}$.
	To this end assume that $\hat f(t)$ has at least two other pre-images $t'' > t' > t$ under $\hat f$.
	Since $\hat f$ is monotonic between jumps, there must be at least one jump between
	every consecutive pair of pre-images of $\hat f$. It follows that $t \leq t_i \leq t'$ and
	$t' \leq t_j \leq t''$ for some $j \geq i+1$. In particular, it follows that $[t_i,t_{i+1}] \subset [t_i,t'']$,
	which implies $t'' - t_i \geq \frac{K - \gamma}{2}$. Now, using the fact that within the interval
	$[t_i, \ce{\log N}] \supset [t_i, t'']$ there are at most $\log (K-\gamma)$ jumps, we can write
	\begin{eqnarray*}
	\hat f(t'') &=& t'' - \sum_{k\,:\, t_k \leq t''-1} (\Delta+1) \\
		    &\geq& t - \sum_{k\,:\, t_k \leq t-1} (\Delta+1) + (t''-t) - \sum_{k\,:\, t_k\in [t_i, t''-1]} (\Delta+1) \\
		    &\geq&  \hat f(t) + (\gamma + \frac{K-\gamma}{2}) - (\Delta+1) (\log (K - \gamma) + 1) \\
		    &\geq&  \hat f(t) + \frac{K}{2} - (\Delta+1) (\log K + 1),
	\end{eqnarray*}
	which, using $\hat f(t) = \hat f(t'')$, implies 
	$$
	\frac{K}{2} - (\Delta+1) (\log K + 1) \leq 0.
	$$
	Rearranging the terms we obtain $\frac{K}{2(\log K + 1)} \leq \Delta+1$, which clearly implies $K \leq r(\Delta)$
	and $t \in [\ce{\log N} - r(\Delta)]$. This concludes the proof.

		\begin{figure}[!ht] 
				\centering
				\includegraphics[width=100mm]{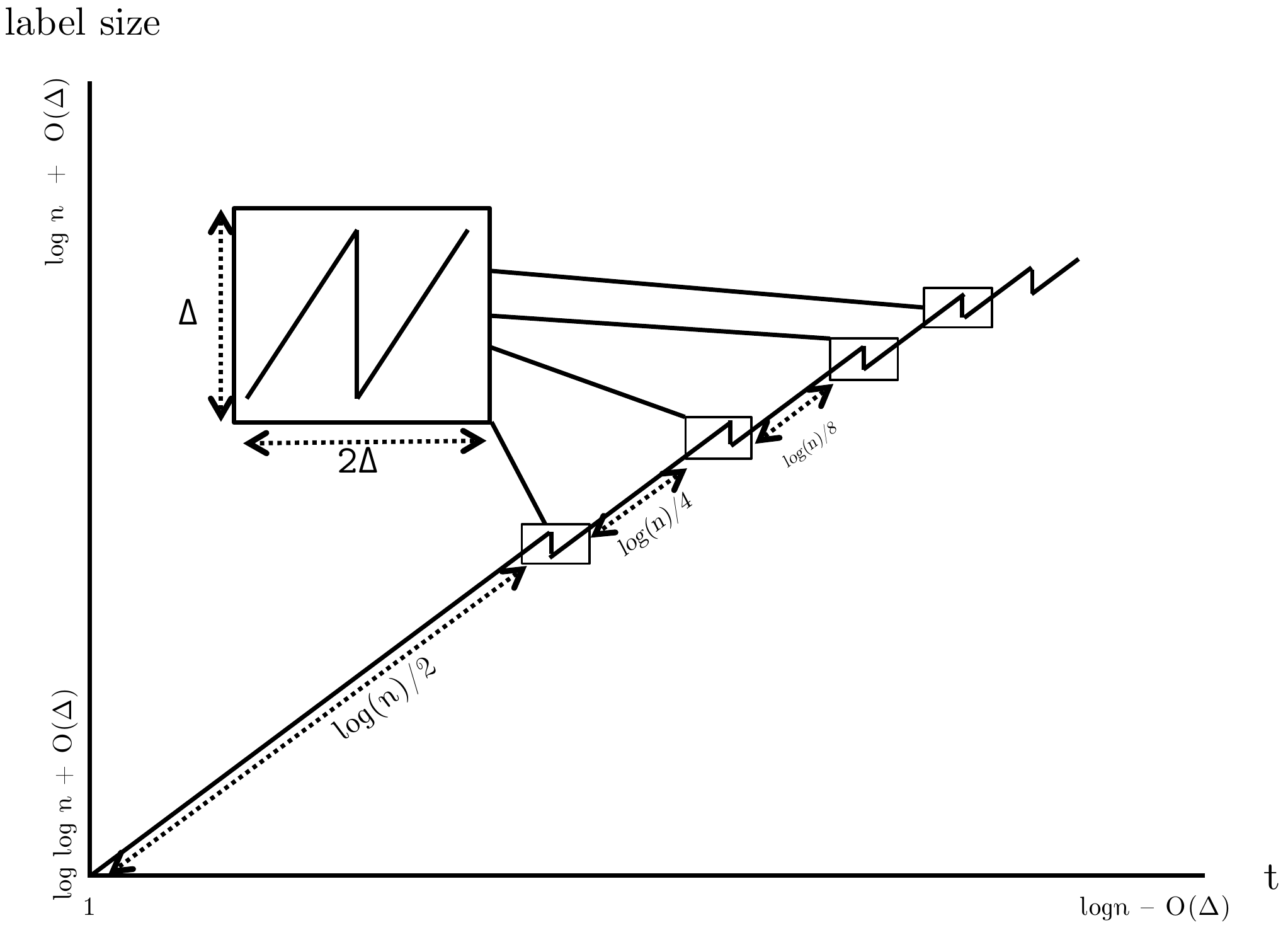}
				\caption{An illustration of Lemma~\ref{lem:properties}}
				\label{fig:TwoNumbersImage}
			\end{figure}

%

\subsection{Proof of Lemma~\ref{lem:uniqueness}}\label{apx:uniqueness}

Consider two vertices $u,v\in V(G)$ and assume $\la(u)=\la(v)$. Since the prefix of every
label has the same length and the same parts for every vertex in $V(G)$, it must
hold that $D(u) = D(v)$, $R(u) = R(v)$ and $Type(u) = Type(v)$.

Assume first that both $u$ and $v$ are shallow. In this case $\la(u) = \la(v)$ implies
that $f(t(u)) = f(t(v))$. Now, since $R(u) = R(v)$, both $v$ and $u$ are either early, 
or late, implying by Lemma~\ref{lem:properties} that $t(u) = t(v) =: t$. It follows by definition 
of the labels, that the first $\alpha_t$ bits of the suffices of $\la(u)$ and $\la(v)$
contain $Id(\phi(u))$ and $Id(\phi(v))$, respectively, implying $Id(\phi(u)) = Id(\phi(v))$.
From the correctness of the embedding $\phi$ and uniqueness of the identifiers $Id$ it follows 
that $u=v$.

Finally, assume that both $u$ and $v$ are deep. In this case $Type(u)=Type(v)$ implies
again that $t(u) = t(v)$. We now reach the conclusion $u=v$ as in the previous case.

\subsection{Proof of Lemma~\ref{lem:decoding}}\label{apx:decoding}

All operations performed by the decoder take $O(1)$ time except for the computation of
pre-images of an integer under $f$, and steps that inspect properties in $H_n$. The former
can obviously be performed in $O(\log \log n)$ time as follows. If the label corresponds
to a deep label, then the decoder checks all $r(\Delta) = O(1)$ possible preimages, so assume next
that the vertex is shallow. Notice that $t - f(t) = O(\log \log n)$ for every $t\in D$. 
Thus, given $f(t)$, one can compute in $O(\log \log n)$ time all $f(y)$ for all integers $y$ with 
$|y - f(t)| = O(\log\log n)$. All preimages of $f(t)$ under $f$ will be found this way.  

We focus hereafter on the complexity of deciding if $\phi(u)\phi(v) \in E_n$ and the 
computation of the edge-identifier corresponding to two vertices in $H_n$. 
To this end we need to elaborate on the way labeling is performed for the graph $H_n$.
Recall that $H_n$ is built from the complete binary tree $\mathrm{T}$ with $\log N$ levels by
splitting each vertex at level $t$ into $\gamma_t =  c \log (N/2^t) $ vertices, thus forming clusters, and then 
connecting two vertices if they belong to clusters at distance at most $g(\Delta) = 2 \log \Delta + 2$
apart in $\mathrm{T}$. 
The labeling of vertices is performed by assigning a \emph{range} $R_C$ of size
$\gamma_t$ to every cluster at level $t$ in a single breadth-first search traversal, and then 
ordering the vertices in the cluster arbitrarily using unique values in the range $R_c$. 
More precisely, the ranges are constructed 
starting from the root of $\mathrm{T}$, processing $\mathrm{T}$ level by level as follows. The Range of the
root cluster is simply $[1,\gamma_1]$. The range of the left child of the root is assigned the 
range $[\gamma_1+1, \gamma_1 + \gamma_2]$, while the right child is assigned the range
$[\gamma_1 + \gamma_2 + 1, \gamma_1 + 2 \gamma_2]$ and so on. At level $t$, the assignment
of ranges starts from the left descendant of the parent at level $t-1$ with the lowest range
(the range with the smallest lower bound),
followed by the right one, and so on. This process defines an ordering of the clusters,
in each level. We denote by $POS(C)$ the position of the cluster $C$ in this order.  

Given a label $L$ of a vertex $w\in V_n$ at level $t = t(w)$, its clusters range (which identifies the cluster) can be computed as follows.
By definition of the labeling, we have that $L$ is contained in the range $I(w) = [A+1, A+ \gamma_t]$ with
$$
A = \sum_{i=1}^{t-1} 2^{i-1} \gamma_i + (POS(C(w)) - 1)\gamma_t,
$$
where $C(w)$ is the cluster to which $w$ belongs.
We now notice that 
$$
\sum_{i=1}^{t-1} 2^{i-1} \gamma_i = \sum_{i=1}^{t-1} 2^{i-1}(\log n - i) = 
c\log n \sum_{i=1}^{t-1} 2^{i-1} - \frac{c}{2} \sum_{i=1}^{t-1} 2^i i,
$$
where both sums in the last formula have simple closed form expressions. It follows
that $POS(C(w))$ can be easily computed from $t$ and $L$ in $O(1)$ time.

Finally, the decoder can compute both the parent and the descendants of a cluster $C$ at 
level $t$ whose range is $[A+1,A+\gamma_t]$. The parent is computed using the fact that
it is at level $t-1$ and its position in this level is $\ce{POS(C)/2}$. The descendants
can be computed using the fact that their level is $t+1$ and their positions in this level
are $2POS(C)-1$ and $2POS(C)$. Both computations can clearly be performed in $O(1)$
time, once presented with the values $\gamma_t$.
We conclude that it is possible to obtain in time $O(1)$ the ranges of all neighbours 
of a given cluster $C$ in $\mathrm{T}$. 

We turn to the problem of deciding whether $\phi(u)\phi(v) \in E_n$. The decoder starts
by computing the clusters $C_u$ and $C_v$ of $u$ and $v$, respectively in time $O(\log n\log n)$.
Then the decoder computes paths $P_u$ and $P_v$ of length at most $g(\Delta)$ from $C_u$ and $C_v$,
respectively in the tree $\mathrm{T}$. Each path is obtained by successively moving at most $g(\Delta)$ steps from a 
cluster to its parent in $\mathrm{T}$.
The computation of these paths takes $O(1)$ time, using the aforementioned method for 
computing a parent of a cluster.
By definition of $H_n$, we have that $\phi(u)\phi(v) \in E_n$ if and only if
the clusters of $\phi(u)$ and $\phi(v)$ are at distance at most $g(\Delta)$ from one another in $\mathrm{T}$.
This can be now easily tested by inspecting the paths $P_u, P_v$, which must contain the
least common ancestor (LCA) of $\phi(u)$ and $\phi(v)$, in case they are neighbours in $H_n$ (see Fig.~\ref{fig:construction}
for an illustration).

With a similar argument one shows that each edge identifier can be decoded in $O(1)$ time.
The details are similar, and thus omitted. Performing the decoding to all $\Delta$ edge identifiers
gives a total running time of $O(1)$.
We conclude that the decoder can be implemented to have running time $O(\log \log n)$. 


\subsection{Proof of Lemma~\ref{THM:split}}\label{apx:split}
		We first create an Eulerian multigraph $G'=(V,E')$ from $G$ by adding at most $|V|/2$ new
		edges, comprising the a matching that connects pairs of vertices in $G$ with odd degree.
		The degrees of all vertices in $G'$ are even, hence it is Eulerian.
		We proceed by finding  an Eulerian  circuit $P$ in $G'$ and directing every edge according to the direction along $P$.
		 Every vertex in $G'$  with degree $d$ has now exactly $ \frac{d}{2} $ incoming and outgoing edges.
		The label of a vertex will only correspond to the vertices which are adjacent through  outgoing edges in $P$. 
		This number is at most $ \frac{d}{2} $. The identifiers of the edges in  $M$ are not included in the resulting label.
		The labeling scheme obtained  will  assign unique labels to each of the vertices, and concatenate the selected labels to each vertex.
		since $d \leq k$ this yields a label of size $ (\lceil \frac{k}{2} \rceil+1) \log n$.	

  \subsection{Proof of Theorem~\ref{thm:combina}} \label{apx:combina}
		We assume  the vertices of the graph $G=(V,E)$ to be numbered from $0$ to $n-1$.
					We call a binary vector $C$ of length $n$  an \emph{adjacency vector} of a vertex $v \in V$ if it satisfies that $C_i=1$ if and only if $v_i$ is a neighbor of $v$ in $G$.
					We denote by $\bar C$ the binary vector with $\bar C_i =1 $ if and only if $ C_i = 0$.
					We interpret the vectors $C$ and $\bar C$ and sets $S\subset V$ as sequences of integers in the range $ [0,n-1]$,  as in Lemma~\ref{lemma:combinadics}.
	 				Let $v \in V$ be a vertex with $k' \leq k$ neighbors, and let  $C$ be its adjacency vector.
					Our labeling scheme assigns  every $v \in V$ to an appropriate $\la(v)$ as follows.
					
					We first use the encoder  from the labeling scheme described in Lemma~\ref{THM:split} to obtain a temporary subset $S_v$ of neighbors of $v$.
					If $k'< \frac{2n}{3}$   we  set the last bit of $\la(v)$ to 0, and append it to the number mapped to the sequence of integers corresponding to $S_v$ under the bijection in Lemma~\ref{lemma:combinadics}. 
					 According to Lemma~\ref{THM:split},  we are  assured  that $\vert S_v \vert \leq \ce{\frac{k}{2}} \leq \ce{\frac{n}{3}}$.
					If $k' \geq \frac{2n}{3}$, we set the last bit of $\la(v)$ to $1$, and append to it the number corresponding to  $\bar C$ under the bijection in Lemma~\ref{lemma:combinadics}. 
					Since $k'>\frac{2n}{3}$, the number of $1$'s in $\bar C$ is at most $\frac{n}{3}$.
					Finally, we attach to every label a binary representation of $k'$ and a unique vertex  identifier using exactly $\log n$ bits each.
					The encoder performs the operation in polynomial time. It is straightforward to verify the claimed label length.		
	
					The decoder receives  $\la(u)$ and $\la(v)$  and extracts the corresponding vectors $C_u$ and $C_v$,
					the adjacency vectors of $u$ and $v$, resp., using Lemma~\ref{lemma:combinadics}, and possibly  a bit inversion operation.
					The decoder returns \textbf{true} if and only if  either $C_u(v)=1$ or $C_v(u)=1$.
					We note that the decoding can be implemented  in $O( k \log \binom{n}{k} )$   time.

\subsection{Proof of Lemma~\ref{lemma:Mathias}}\label{apx:Mathias}
Stirling's approximation  yields the following asymptotic approximation.	
	$$ \log \binom{cn}{n} \sim \log \left(\frac{c^c}{{(c-1)}^{c-1}} \right)\cdot n.$$
	Accordingly, $ \log \binom{n}{\frac{1}{3}n} \sim \frac{2.75}{3}n$. Since   $  \log ( \frac{ 10^{10} }{9^{9}})/10 < 0.5 $, and since the function  $\binom{n}{k}$ is increasing for  $\sqrt{n} \leq k \leq \frac{n}{3}$ our labeling scheme will incur strictly less than $0.5n$ label size for the proposed labeling scheme over the range specified.
	
Let $x_n$ be defined by $x_n = \log (n!) - (n \log n - n + 1/2(\log 2 \pi n))$, and by its definition $x_n \rightarrow 0$ as $ n \rightarrow \infty$. In addition:
$$ \log \binom{k}{n} = \log(n!) - \log ((n-k)!) - \log (k!)$$
$$ = (x_n- x_{n-k}-x_k) + n \log n - (n -k)\log(n -k) -k (\log k) +1/2 \log (n/ k(n-k)2 \pi)). $$
Define $f(n,k)$ by $f(n,k) = k/2 \log n - \log \binom{k}{n}$. We are now interested in the smallest $k$ for which $k_n = f(n,k) \leq 0$ for fixed $n$. 
$f(n, \lfloor \sqrt n \rfloor) < 0$, so $k_n > \sqrt n $, so we  can assume $\sqrt n \leq k \leq n/2$, which implies that  $ 1/2 \log (n/(k(n-k)2\pi)) = O(\log n)$.
Furthermore $x_n- x_{n-k}-x_k =O(1)$ so
$$ f(n,k) = k/2 \log n - n \log n + (n-k) \log (n-k) + k \log k +O(\log n)$$
$$ = k/2 \log n + n \log ((n-k) / n )+ k \log (k / (n-k)) + O( \log n).$$

Note that as $ n \log((n - k)/n) = O(k)$ this means that

$$ \frac{f(n,k)}{k} = \frac{1}{2}\log n - \log ((n-k)/k)+O(1) = \log \left( \frac{k \sqrt n}{n-k} \right)  +O(1) $$

This means that there exists some constant $c > 0$ such that if  $\log \left( \frac{k \sqrt n}{n-k} \right) \geq c$
 then $ f(n,k)  \geq 0 $. But

$$ \log \left( \frac{k \sqrt n}{n-k} \right)  \geq c \Longleftrightarrow k \geq c \frac{n}{\sqrt{n} +c }.$$

If  $k > c \sqrt n$ then $f(n,k) \geq 0$. We can conclude $ k_n \leq c \sqrt n$, and we may assume that $ c\sqrt{n} \geq k \geq \sqrt{n}$.

Now we can conclude that 
$n log((n - k)/n) = -k + O(k^2/n) = -k + O(1)$. Hence

$$\frac{f(n,k)}{k} = \frac{1}{2} \log n - 1 + k \log (k / (n- k))+ O ( \frac{\log n}{\sqrt n}),$$
and thus
$$\frac{f(n,k)}{k} = \log \left( \frac{k \sqrt n }{e(n-k)} \right) +O( \frac{\log n}{\sqrt n}).$$
We can now conclude with $c_n = exp( O(\frac{\log n}{\sqrt n}))$
$$k_n = c_n e \frac{n}{\sqrt{n} + c_n e}+ O(1) = e \sqrt n + O(\log n).$$

\section{Labeling Schemes for Bounded-degree Planar Graphs}\label{apx:planar}

We present here our labeling scheme for the class of bounded degree planar graphs $\mathcal{P}(n,\Delta)$.
Again, we rely on
an embedding of the given planar graph $G$ into a graph $PL_n = (W_n, A_n)$, obtained from the complete binary
tree $\mathrm{T}$. The only difference between $PL_n$ and $H_n$ is that in $PL_n$, every vertex $v$ in level
$t$ of $\mathrm{T}$ is divided into a cluster of $\delta_t = c' \sqrt{\frac{N}{2^{-t}}}$ vertices 
$z_1(v), \cdots, z_{\delta_t}(v)$ (instead of $\gamma_t$ in $H_n$), for some constant $c'$. 
As for $H_n$, two vertices in $W_n$ are connected if and only if the distance between their clusters in $\mathrm{T}$ is at most
$\hat g(\Delta) = O(\log \Delta)$. Bhatt et~al.~\cite{BCLR} showed the following.

\begin{theorem}[{{Bhatt et~al.~\cite{BCLR}}}]\label{thm:embedding_planar}
$PL_n$ is edge-universal for the class of bounded degree planar graphs $\mathcal{P}(n,\Delta)$. 
Furthermore, $PL_n$ has $O(n)$ vertices and $O(n\log n )$ edges.
\end{theorem}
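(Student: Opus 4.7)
The plan is to mirror the Bhatt--Chung--Leighton--Rosenberg construction used for outerplanar graphs, replacing the $O(\log n)$-bisector ingredient by the $O(\sqrt n)$-separator promised by the Lipton--Tarjan planar separator theorem. First I would recall the colored variant: every $n$-vertex planar graph equipped with a $k$-coloring $V = V_1 \cup \cdots \cup V_k$ admits a $k$-bisector (in the sense of Section~\ref{sec:outerplanar}) of size $O(\sqrt{k\,n})$. Specialized to bounded-degree planar graphs with the constant color bound $k(\Delta) = O(\log \Delta)$, this yields $k(\Delta)$-bisectors of size $O(\sqrt n)$, which matches the cluster sizes $\delta_t$ at the appropriate levels.

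Equipped with this separator, I would run the same recursive embedding as in the outerplanar case: at the level-$t$ vertex of $\mathrm T$ store a $k(\Delta)$-bisector of the current subgraph, which has at most $N/2^t$ vertices and therefore a bisector of size $O(\sqrt{N/2^t}) = \delta_t$, together with the residual, previously-unassigned neighbours of the vertex-set stored at the ancestor $\hat g(\Delta)$ levels above. Colors are assigned exactly as in the outerplanar construction, so that by induction every edge of $G$ is realized as an edge of $PL_n$ whose endpoints lie in clusters at $\mathrm T$-distance at most $\hat g(\Delta)$, establishing edge-universality.

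For the vertex bound I would sum cluster sizes level by level. Level $t$ contains $2^{t-1}$ clusters of size $\delta_t = c'\sqrt{N/2^t}$, contributing $\Theta(\sqrt{N \cdot 2^t})$ vertices, hence
\[
|W_n| = \sum_{t=1}^{\ce{\log N}} \Theta(\sqrt{N \cdot 2^t}) = O(\sqrt{N \cdot 2^{\log N}}) = O(N) = O(n),
\]
the sum being dominated by its last term. For the edge bound, within each cluster at level $t$ there are at most $\delta_t^2 = O(N/2^t)$ edges, and the $O(1)$ clusters lying within $\mathrm T$-distance $\hat g(\Delta)$ each contribute $O(\delta_t^2)$ further edges, since their sizes are within a constant factor of $\delta_t$. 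Multiplying by the $2^{t-1}$ clusters at level $t$ yields $O(N)$ edges per level, and summing over the $\ce{\log N}$ levels gives $|A_n| = O(N \log N) = O(n \log n)$.

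The main obstacle is establishing the colored variant of the planar separator theorem with the stated $O(\sqrt n)$ bound: the classical Lipton--Tarjan theorem handles a single color class, and the extension to $k(\Delta)$ classes simultaneously is obtained by iteratively applying the theorem to each color class and taking the union of the resulting separators, using bounded degree to ensure that the residual neighbour sets added at each node of $\mathrm T$ also fit within $\delta_t$. Once this strengthened separator is in hand, the remainder of the proof is a direct analogue of the outerplanar argument and the two bounds on $|W_n|$ and $|A_n|$ follow by the calculations above.
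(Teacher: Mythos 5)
The paper does not actually prove Theorem~\ref{thm:embedding_planar}: it is imported verbatim from Bhatt, Chung, Leighton and Rosenberg~\cite{BCLR}, and the text around it only uses the statement. So there is nothing internal to compare against; what you have written is a reconstruction of the BCLR argument, and as such it is essentially sound. Your counting is correct (note that you silently fixed the paper's typo $\delta_t = c'\sqrt{N/2^{-t}}$ to the intended $\delta_t = c'\sqrt{N/2^{t}}$, which is what Lemma~\ref{lem:simplebounds_planar} actually uses): level $t$ contributes $2^{t-1}\delta_t = \Theta(\sqrt{N2^t})$ vertices, the geometric sum gives $O(N)$ vertices, and $2^{t-1}\delta_t^2 = O(N)$ edges per level times $\ce{\log N}$ levels gives $O(N\log N)$ edges. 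The one soft spot is the multi-colored planar separator: the naive scheme you describe --- bisect color $1$, then bisect color $2$ inside each part, and so on, taking the union --- yields a simultaneous bisector of size $O(\sqrt{2^k n})$ (the number of pieces doubles each round and $\sum_j \sqrt{n_j} \leq \sqrt{2^{i}n}$ by concavity), not the $O(\sqrt{kn})$ you state; the sharper $O(\sqrt{kn})$ bound is a genuine theorem of BCLR but needs a different argument. Since $k(\Delta) = O(\log\Delta)$ is a constant here, either bound gives $O(\sqrt{n})$-size bisectors and the construction goes through unchanged, so this imprecision does not affect the conclusion. The edge-universality step is, as in the paper's own treatment of the outerplanar case, only sketched (one must verify that the residual neighbours forced into a cluster at distance exactly $\hat g(\Delta)$ still fit within $c'\sqrt{N/2^t}$ after the constant $c'(\Delta)$ is fixed), but that is the same level of detail at which the paper itself operates.
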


We use the latter theorem to prove the following result, that is proved in Appendix~\ref{apx:planar}.

\begin{theorem}\label{thm:zevel}
For every fixed $\Delta \geq 1$, the class $\mathcal{P}(n,\Delta)$ of bounded-degree planar graphs admits 
a labeling scheme of size $\frac{\ce{\Delta/2}+1}{2}\log n + O(\log \log n)$, with encoding
complexity $O(n \log n)$ and decoding complexity $O(\log n\log n)$. The average label size
of the latter scheme is $(1+o(1))\log n + O(\log\log n)$.
\end{theorem}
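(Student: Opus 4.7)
The plan is to mirror the outerplanar construction of Section~\ref{sec:outerplanar}, but with the edge-universal graph $PL_n$ in place of $H_n$, and with the orientation trick of Lemma~\ref{THM:split} applied up front. First I would compute the embedding $\phi:V(G)\to W_n$ guaranteed by Theorem~\ref{thm:embedding_planar} (Bhatt et~al.\ show this can be done in $O(n\log n)$ time by recursively using planar separators of size $O(\sqrt{n})$ instead of outerplanar bisectors of size $O(\log n)$). Then I would invoke Lemma~\ref{THM:split} on $G$ itself to orient its edges so that every $v\in V(G)$ has at most $\lceil\Delta/2\rceil$ out-neighbours, so that it suffices to store only $\lceil\Delta/2\rceil$ edge identifiers per vertex.

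Next I would adapt the coherent naming of Definition~\ref{def:coherent} and the bound Lemma~\ref{lem:bound}. The counting is now different: the number of vertices of $PL_n$ at levels $\le t$ is $\sum_{i\le t} 2^i\delta_i = O(\sqrt{N\,2^t})$, while the neighbourhood of any vertex at level $t$ has size $O(\sqrt{N/2^t})$, since it meets $O(2^{\hat g(\Delta)})$ clusters of that maximum size. Hence
\[
\alpha_t \;=\; \tfrac{1}{2}(\log N + t) + O(1), \qquad
\beta_t \;=\; \tfrac{1}{2}(\log N - t) + O(1),
\]
so the suffix length is
\[
\alpha_t + \lceil\Delta/2\rceil\,\beta_t \;=\; \tfrac{\lceil\Delta/2\rceil+1}{2}\log N \;+\; \tfrac{1-\lceil\Delta/2\rceil}{2}\,t \;+\; O(1),
\]
which is maximized at $t=0$ and gives the desired bound $\tfrac{\lceil\Delta/2\rceil+1}{2}\log n + O(1)$.

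Unlike the outerplanar case, here the suffix length is essentially a monotone function of $t$ (it decreases linearly in $t$ with slope of order $1$), so I would not attempt a constant-size prefix and a Lemma~\ref{lem:properties}-style collision analysis. Instead, I would simply prepend a $\lceil\log\log N\rceil + O(1)$ bit field encoding $t(v)$ explicitly; this is why the additive overhead in the theorem becomes $O(\log\log n)$ rather than $O(1)$. Given the level, the decoder extracts $\mathit{Id}(\phi(v))$ and the at most $\lceil\Delta/2\rceil$ out-edge identifiers, decides $\phi(u)\phi(v)\in A_n$ by walking the two length-$\hat g(\Delta)$ ancestor paths in $\mathrm{T}$ until the LCA is met, and checks whether the relevant edge identifier appears in either $\la(u)$ or $\la(v)$ (since the orientation stored only one of the two directions). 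Arithmetic on $O(\log n)$-bit numbers inside cluster-range computations dominates, yielding the claimed decoding complexity.

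The average label-size claim will be handled by a weighted count over levels: since the number of vertices at level $t$ is $2^t\delta_t = \Theta(\sqrt{N\,2^t})$, a fraction $1-o(1)$ of all vertices lie at levels $t \ge \log N - \omega(1)$, where $\alpha_t = \log n + O(1)$ and $\beta_t = O(1)$, so each such label has size $\log n + O(\log\log n)$; summing the small contribution of shallower levels gives $(1+o(1))\log n + O(\log\log n)$. The main obstacle is the book-keeping around the ceiling operations in $\alpha_t,\beta_t$ and the constants hidden in $g(\Delta)$ when computing neighbourhood sizes; these contribute additional terms that must be absorbed into the $O(\log\log n)$ slack. Uniqueness of labels, as in Lemma~\ref{lem:uniqueness}, follows directly once the level prefix is included, because the level then fixes the parsing and the coherent naming of $PL_n$ is injective on each level.
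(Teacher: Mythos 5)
Your proposal is correct and follows essentially the same route as the paper: embed $G$ into $PL_n$, use the cluster-size counts $|W_n^t|=O(\sqrt{N2^t})$ and $|\Gamma(v)|=O(\sqrt{N2^{-t}})$ to get differentially sized vertex and edge identifiers whose total is maximized at $t=0$, store the level explicitly in $O(\log\log n)$ bits instead of running the collision analysis, reduce to $\ce{\Delta/2}$ stored edge identifiers per vertex (you via Lemma~\ref{THM:split}, the paper via the Kannan et~al.\ splitting --- both are endorsed in the text), and bound the average size by noting that all but an $o(1)$ fraction of vertices sit at levels within $O(\log\log n)$ of $\log N$. The only nitpick is your claim that $\beta_t=O(1)$ for levels $t\geq \log N-\omega(1)$ --- with a threshold at $\log N - \Theta(\log\log n)$ one gets $\beta_t=O(\log\log n)$, which is still absorbed into the stated additive slack, so the conclusion stands.
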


\begin{proof}

We use a simplified version of the method used in Section~\ref{sec:outerplanar}, thus many details
are omitted.
Our labeling scheme for planar graphs works with an embedding $\psi:V \rightarrow W_n$ of the given graph $G$
into $PL_n$. Levels of vertices $w\in W_n$ and $u\in V$ are defined as before.
Analogously to our previous scheme, we assign unique identifiers $Id(v)$ to vertices $v\in W_n$ level by level, 
starting from the highest level, and use them to define edge identifiers. Let $W^t_n \subset W_n$ 
denote the set of vertices in $PL_n$ with level at most $t$. Then we have the following lemma.

\begin{lemma}\label{lem:simplebounds_planar}
The following properties hold for every level $t$ and every vertex $v$ in level $t$.
\begin{enumerate}[i.]
	\item $|W_n^t| = O(\sqrt{N2^t})$.
	\item $|\Gamma(v)| = O(\sqrt{N2^{-t}})$.
\end{enumerate}
\end{lemma}

\begin{proof}
\textit{i.} Note that the number of vertices in level $i$ is $O(2^i\sqrt{\frac{N}{2^i}})$, thus for some constant $d$,
$$
|W_n^t| = c' \sum_{i=1}^t 2^{i-1}\sqrt{\frac{N}{2^i}} \leq c' \sqrt{N} \sum_{i=1}^t \sqrt{2}^{i-1} = O(\sqrt{N2^t}).
$$
\textit{ii.} Since $v$ is connected to vertices in clusters at distance at most $O(\log \Delta) = O(1)$ from its
cluster, there are $O(1)$ such clusters. Furthermore, the highest level of such a cluster is at least $t - O(1)$, thus
every such cluster contains $O(\delta_t) = O(\sqrt{N2^{-t}})$ vertices, which concludes the proof. $\qed$
\end{proof}

Lemma~\ref{lem:simplebounds_planar} implies that a the encoding length of label containing 
a vertex identifier $Id(\psi(v))$ together with $s \in \mathbb{N}$ edge identifiers corresponding
to $\psi(v)$ is 
\begin{equation}\label{eq:bound}
\log (\sqrt{N2^t}) + s \log (\sqrt{N2^{-t}}) + O(1) = \log n + \frac{s-1}{2}(\log n - t(v)) + O(1).
\end{equation}
Additionally storing the level $t(v)$ using $O(\log \log n)$ bits allows to perform decoding easily, without
needing the complications arising in our scheme for outerplanar graphs. Finally, combined with
the label splitting technique of Kannan et~al.~\cite{kannan1988implicit}
the statement of the theorem follows.

To prove the bound on the average label length we perform the following simple computation.
Consider a constant $B \gg \Delta$ and let $t^* = (1-B^{-1})\log n$. By choice of
$t^*$ and using Lemma~\ref{lem:simplebounds_planar} it follows that $|W_n^{t^*}| = c'n^{1-\epsilon}$ and 
$|W_n \setminus W_n^{t^*}| = n - c' n^{1-\epsilon}$ for some constants $c', \epsilon > 0$. Now, the length of every 
label $\la(v)$ for $v\in W_n^{t^*}$ can be trivially bounded by $\Delta \log n$, while for vertices $v\in W\setminus W^{t^*}_n$,
whose level is at least $t^*$, we use~(\ref{eq:bound}) to upper-bound the label size by 
$$
\log n + \frac{\Delta-1}{2}(\log n - t^*) + O(\log \log n) = \left(1 + B^{-1}\frac{\Delta-1}{2}\right) \log n + O(\log\log n).
$$
Finally, combining the bounds one obtains the following bound on the sum of the label sizes.
$$
\sum_{v\in V} |\la(v)| \leq c' n^{1-\epsilon} \Delta \log n + (n - c'n^{1-\epsilon})\left(1 + B^{-1}\frac{\Delta-1}{2}\right) \log n + O(n\log \log n).
$$
The term $d n^{1-\epsilon} \Delta \log n$ is clearly negligible, since $\Delta$ is fixed. We conclude that
the average label length is at most $\left(1 + B^{-1}\frac{\Delta-1}{2}\right)(1+ h(n)) \log n + O(\log \log n)$ for some function $h(n) = o(1)$. 
Repeating the argument with $B = O(\log \log n)$ leads to a bound of 
$$
(1+o(1))\log n + O(\log\log n)
$$ 
on the average label size, which is asymptotically the best possible average label size.

\end{proof}

In light of the $2\log n + O(\log\log n)$ labeling scheme of Gavoille and Labourel~\cite{gavoille2007shorter} for general
planar graphs, our result improves the best known bounds for $\Delta \leq 4$, and matches the bound in~\cite{gavoille2007shorter} 
for $\Delta \leq 6$.

\end{document}